\documentclass{article}

\usepackage[margin=1.5cm]{geometry}

\usepackage{tikz}
\usetikzlibrary{arrows.meta, positioning, shapes.geometric}

\usepackage{amsmath,amssymb,amsfonts,mathtools,mathrsfs,bm}
\usepackage{amsthm} 
\usepackage{tabularx,booktabs,array,longtable,multirow,threeparttable}
\usepackage{graphicx,epsfig,subfigure}
\usepackage{color,colortbl} 
\usepackage{caption}
\usepackage{rotating}
\usepackage{diagbox}
\usepackage{pdflscape,lscape}
\usepackage{float}
\usepackage[dvipsnames]{xcolor}

\usepackage{cite,url,hyperref}
\hypersetup{
  colorlinks=true,
  linkcolor=lightblue,
  urlcolor=lightblue,
  citecolor=lightblue
}

\usepackage{algorithm,algpseudocode}

\usepackage{pifont}
\usepackage{tcolorbox}
\tcbuselibrary{skins}
\usepackage{multicol}

\definecolor{lightblue}{RGB}{70,130,180}   

\newcommand{\RMSection}[2]{\textbf{\hyperref[#2]{#1}}\par}
\newcommand{\RMSub}[2]{\hspace*{1.5em}{\color{gray}\footnotesize$\triangleright$}~\hyperref[#2]{#1}\par}

\theoremstyle{plain}
\newtheorem{theorem}{Theorem}
\newtheorem{proposition}[theorem]{Proposition} 

\theoremstyle{definition}

\theoremstyle{remark}

\title{Making Gaussian Kolmogorov--Arnold Networks Reliable and Accurate}

\author{
Amir Noorizadegan$^{1,*}$, Sifan Wang$^{2}$, Leevan Ling$^{1}$ \\[4pt]
{\small $^{1}$Department of Mathematics, Hong Kong Baptist University, Hong Kong SAR, China} \\
{\small $^{2}$Institution for Foundations of Data Science, Yale University, New Haven, CT 06520, USA}\\
{\small *Corresponding author: \texttt{amir\_noori@hkbu.edu.hk}}
}

\begin{document}
\maketitle

\begin{abstract}
\noindent
Kolmogorov--Arnold Networks (KANs) replace fixed activations with learnable univariate edge functions whose behavior depends strongly on the chosen basis. Gaussian radial basis functions provide a simple and efficient alternative to splines, but their accuracy and stability are highly sensitive to the scale parameter \(\epsilon\), which has not been studied systematically. We analyze this dependence through the geometry and conditioning of the first-layer feature matrix. Because the first layer is defined directly on the input domain, any loss of feature distinguishability introduced there propagates through the entire network. Based on this analysis, we identify the practical operating interval
\[
\epsilon \in \left[\frac{1}{G-1},\frac{2}{G-1}\right],
\]
where \(G\) is the number of Gaussian centers. This interval is proposed as a stable design rule rather than a universal optimum. Extensive experiments on function approximation and physics-informed problems confirm its reliability across different collocation densities, grid resolutions, architectures, and input dimensions. We also show how the same principle supports fixed-scale selection, variable-scale models, constrained optimization of \(\epsilon\), and efficient scale search using early-stage training error. These results establish scale selection as a central design principle for reliable and accurate Gaussian KANs. The code and implementation details are available at \url{https://github.com/AmirNoori68/Gaussian-KAN}.
\end{abstract}

\medskip
\noindent\textbf{Keywords:} Gaussian Kolmogorov--Arnold networks;  Gaussian shape parameter; conditioning of KANs; radial basis functions; physics-informed neural networks.

\section{Introduction}
Kolmogorov--Arnold Networks have recently emerged as a structured alternative to classical multilayer perceptrons (MLPs), replacing fixed node activations with learnable univariate functions placed on edges~\cite{Liu24,Liu24b}. 
This edge-based construction provides a different inductive bias for multivariate approximation and has attracted growing interest in scientific machine learning \cite{SS24b,SS24c,SS24d} and physics-informed modeling~\cite{Raissi19,Karniadakis21, Noorizadegan25}. 
The original KAN formulation is built from B-spline bases~\cite{Liu24}, and a rapidly expanding literature has explored alternative basis families to improve smoothness, computational efficiency, and approximation quality.

Recent KAN variants have employed many different basis functions, including Chebyshev-type constructions~\cite{Daryakenari25,Toscano24_kkan}, Jacobi and rational Jacobi bases~\cite{Aghaei24_fkan,Kashefi25}, Fourier-based activations~\cite{Xu25_fourier,Zhang25}, ReLU-based and adaptive piecewise bases~\cite{Qiu24,KAN_pde_So24,pde_Rigas24f}, as well as wavelet, finite-basis, and polynomial variants~\cite{Bozorgasl24,pde_fbkan_Howard24,Seydi24a}. 
In parallel, recent work has also focused on improving adaptability and efficiency through geometric or grid-refinement strategies~\cite{Actor25}, grid-adaptive physics-informed KANs~\cite{pde_Rigas24f}, sparse-identification frameworks~\cite{Howard_2026}, JAX/GPU implementations~\cite{Daryakenari25,pde_Rigas24f}, Feature-Enriched KANs (FEKAN)~\cite{SS26}, operator-learning KANs~\cite{SS26_2}, and improved initialization strategies~\cite{Rigas25_init}.

Among these basis families, Gaussian radial basis functions (RBFs) are particularly attractive because of their smoothness, locality, and simple analytical form. More broadly, Gaussian KANs are part of the long development of RBF methods. Starting from Hardy’s multiquadric formulation~\cite{Hardy71} and Kansa’s extension to PDEs~\cite{Kansa90}, RBF methods have developed into a large literature on approximation\cite{Ling06a} and scientific computing~\cite{Sabir26a, Sabir26b,Sabir26c,Sabir26d,Noorizadegan24, Tizian4}.

In the KAN setting, replacing B-spline activations with Gaussian bases was first proposed in FastKAN~\cite{Li24}, where Gaussian functions were introduced as computationally efficient surrogates for spline-based activations. In that work, the scale parameter was fixed, mainly to show that Gaussian bases can approximate spline constructions while reducing implementation complexity and computational cost. Subsequent works extended Gaussian-based KANs in several directions, including residual formulations~\cite{Koenig25}, operator-learning architectures with learnable Gaussian bases~\cite{Abueidda25}, improved training efficiency and accuracy~\cite{Chiu2026,Athanasios2024}, and complex-valued Gaussian KANs~\cite{Wolff25,Che25_complexKAN}. Beyond regression and PDE-related applications, Gaussian-based KANs have also shown promising performance in classification, with reduced parameter counts, faster training, and improved stability in large-scale or high-dimensional settings~\cite{Chao26}. Taken together, these works show that Gaussian bases are not merely substitutes for splines, but rather constitute a flexible and expressive class of KAN representations.

What remains largely unresolved, however, is how to choose the Gaussian scale parameter \(\epsilon\) in a neural-network setting, or more generally the scale in RBFs that admit such a parameter, since not all RBFs do. In most existing Gaussian-KAN works, \(\epsilon\) is either fixed heuristically~\cite{Li24,Farea26}, selected by trial and error, or made trainable without a clear understanding of its role~\cite{Wen26,Chao26,Chiu2026}. A recent comparative physics-informed study also included Gaussian RBF KANs among several learnable basis families, but treated the Gaussian scale as a fixed preset hyperparameter rather than the subject of a dedicated analysis~\cite{Farea26}. In that study, the authors further noted that RBFs may suffer from low accuracy and vanishing gradients when the scaling factor is not chosen appropriately, again highlighting that basis scaling should be handled carefully for stable training.

In classical Gaussian RBF interpolation, the scale parameter governs the balance between approximation power and numerical stability, and poor choices can lead either to weak approximation or to severe ill-conditioning~\cite{Schaback23,Larsson24,Roberto19,Roberto23}. This is closely related to the stability--accuracy \emph{trade-off} and the \emph{uncertainty-principle} viewpoint in the RBF literature~\cite{Schaback95,Schaback23,Fasshauer06}, and the general optimal-scaling problem remains open even in classical meshfree methods.

This difficulty in choosing the scale parameter, together with the associated conditioning issues, is one reason why strong-form meshfree RBF methods, despite their potentially much higher accuracy~\cite{Alex03}, have not replaced the more stable weak-form finite element framework. However, as Larsson and Schaback emphasized,
\begin{quote}
\emph{the strong dependence of radial basis function techniques on scaling is a feature, not a bug}~\cite{Larsson24}.
\end{quote}
Rather than treating scale sensitivity as a defect, it can be viewed as a mechanism that, when understood properly, allows one to control approximation behavior.

Noorizadegan et al.~\cite{Amir22,AmirQC}, inspired by Schaback’s conditioning--accuracy trade-off, showed that working within a numerically safe regime, while pushing the conditioning as high as safely allowed by the implementation and the available reliable digits~\cite{Amir_eval}, can lead to good accuracy. One step further in this direction is the flat-limit regime~\cite{Larsson05,Fasshauer12}. We do not consider that regime here, and instead keep the analysis within the simple Gaussian-KAN setting used in current practice.

Our main contributions are summarized as follows:

\begin{itemize}
    \item To the best of our knowledge, this work provides the first systematic study devoted specifically to the scale parameter in Gaussian KANs. We show that the scale parameter can change the model behavior from poor approximation to highly accurate approximation and that, when selected properly, Gaussian KANs can achieve accuracy comparable to or better than Chebyshev-based KANs in several test cases.

    \item We formulate the first layer of Gaussian KANs from a kernel-feature viewpoint. This formulation connects Gaussian KANs with classical RBF and meshfree approximation theory and enables practical conditioning diagnostics for understanding and selecting the scale parameter.

    \item We identify the first layer as the critical layer for scale-parameter selection. The analysis shows that the conditioning of the first-layer feature matrix determines whether the Gaussian basis provides a stable and informative representation before the data are propagated through the remaining network.

    \item We identify the practical operating interval
    \[
    \epsilon \in \left[\frac{1}{G-1},\frac{2}{G-1}\right],
    \]
    where \(G\) is the number of Gaussian centers. This interval provides a simple rule for selecting the scale parameter and remains effective across different target functions, training densities, grid sizes, architectures, input dimensions, and physics-informed PDE problems.

    \item We propose two practical strategies for selecting the scale parameter when the exact solution is unavailable: a variable shape-parameter approach and an efficient search procedure based on the proposed conditioning analysis. These strategies reduce the need for exhaustive trial-and-error sweeps.

    \item We show that the same conditioning-based viewpoint can be extended beyond Gaussian KANs to other RBF-based KANs, including Mat\'ern KANs.
\end{itemize}

Together, these contributions address an important practical gap in Gaussian KANs. The scale parameter has often been selected by trial and error, which can make Gaussian bases appear unreliable even when the poor performance is mainly caused by an inappropriate scale choice.

The remainder of this paper is organized as follows. Section~\ref{sec:formulation} introduces the motivation from Kolmogorov superposition and presents the Gaussian KAN formulation, together with the experimental setup and benchmark target functions. Section~\ref{sec:first_layer} develops the first-layer kernel viewpoint, establishes the first layer as the main representational bottleneck, and supports this role with layer-wise numerical evidence. Section~\ref{sec:conditioning_interval} studies first-layer conditioning and derives a practical interval for the Gaussian scale parameter, then examines how this interval behaves under changes in collocation density, number of Gaussian centers, network architecture, and input dimension, and finally discusses shared variable scales, training-MSE-based scale search, and a representative physics-informed Helmholtz and  and digital option problem. Section~\ref{sec:conclusion} concludes the paper, and Section~\ref{sec:limitations} discusses limitations and directions for future work.

\section{From Kolmogorov Superposition to Gaussian KAN Formulation}
\label{sec:formulation}

In a Gaussian KAN, the scale parameter \(\epsilon\) enters through the univariate edge features. 
Therefore, before studying conditioning, rank loss, or representation collapse, we first write the KAN layer explicitly as a finite Gaussian feature map followed by a linear coefficient map. 
This formulation will make clear why the first-layer feature matrix is the central object in our analysis.

\subsection{Kolmogorov Superposition as Architectural Motivation}

The Kolmogorov superposition theorem provides the structural motivation behind KANs: multivariate dependence can be represented through sums and compositions of univariate functions. 
In its classical form, the theorem states the following.

\begin{theorem}[Kolmogorov superposition theorem~\cite{Kolmogorov57}]
\label{thm:kst}
Let \(d\ge 2\) and let \(C([0,1]^d)\) denote the space of continuous real-valued functions on \([0,1]^d\). 
Then there exist continuous functions \(\theta_{q,p}:[0,1]\to\mathbb{R}\), independent of the target function \(f\), such that every \(f\in C([0,1]^d)\) admits a representation of the form
\begin{equation}
f(x_1,\dots,x_d)
=
\sum_{q=0}^{2d}
\Psi_q\!\left(
\sum_{p=1}^{d}\theta_{q,p}(x_p)
\right),
\label{eq:kst_representation}
\end{equation}
where the outer functions \(\Psi_q:\mathbb{R}\to\mathbb{R}\) are continuous and depend on \(f\).
\end{theorem}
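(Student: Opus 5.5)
The proof follows the classical Kolmogorov--Arnold--Lorentz--Sprecher route. It has two stages: first build universal inner functions \(\theta_{q,p}\) with a strong separation property, then construct the outer functions \(\Psi_q\) for a given \(f\) by an iterative approximation argument.

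\emph{Inner functions.} For each \(k\in\mathbb{N}\) and each shift index \(q=0,\dots,2d\), we use a family of closed intervals of length about \(10^{-k}\), shifted by \(q\cdot 10^{-k}/(2d+1)\) relative to one another. The shift is chosen so that every \(x\in[0,1]\) lies in an interval of the rank-\(k\) family for all but at most one value of \(q\). Products of these intervals give \(d\)-dimensional \emph{cubes}. Each point of \([0,1]^d\) is then covered by cubes of rank \(k\) for at least \(d+1\) of the \(2d+1\) indices \(q\).

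Next we choose continuous increasing \(\theta_{q,p}\) and rationally independent multipliers, following Lorentz's simplification. The aim is that \(\sum_p\theta_{q,p}(x_p)\) maps distinct rank-\(k\) cubes of family \(q\) to pairwise disjoint intervals in \(\mathbb{R}\). This is done inductively in \(k\): at each stage, \(\theta_{q,p}\) is piecewise linear and constant on the rank-\(k\) intervals. The constant values are perturbed slightly at each refinement, within a summable tolerance, so that the limit is continuous and the disjointness of images survives. Choosing these perturbations is a Baire-category or explicit-genericity step: the bad choices form a nowhere dense set.

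\emph{Outer functions.} Fix \(f\) with \(\|f\|_\infty\) finite, and pick \(\lambda\in\bigl(\tfrac{d}{d+1},1\bigr)\), for example \(\lambda = \tfrac{2d+1}{2d+2}\). Choose \(k\) so large that \(f\) oscillates by at most \(\tfrac{1}{2d+2}\|f\|\) on each rank-\(k\) cube. On the image interval of each cube of family \(q\), set \(\chi_q\) equal to \(\tfrac{1}{d+1}f(\text{cube center})\). Between these disjoint intervals, extend \(\chi_q\) linearly, so that \(|\chi_q|\le \tfrac{1}{d+1}\|f\|\).

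At any \(x\), at least \(d+1\) indices \(q\) contribute approximately \(f(x)/(d+1)\), and at most \(d\) indices contribute error bounded by \(\tfrac{1}{d+1}\|f\|\) each. This gives
\[
\Bigl\|f-\sum_q\chi_q\Bigl(\sum_p\theta_{q,p}\Bigr)\Bigr\|\le\lambda\|f\|.
\]
We iterate on the residual, and the outer functions are the sums \(\Psi_q=\sum_j\chi_q^{(j)}\). This series converges uniformly because the \(j\)-th term is bounded by \(\tfrac{1}{d+1}\lambda^{j}\|f\|\), so each \(\Psi_q\) is continuous. Since the inner functions \(\theta_{q,p}\) were fixed in advance, they do not depend on \(f\).

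\emph{Main obstacle.} The main obstacle is the inner-function construction. It must simultaneously achieve two things:
\begin{itemize}
\item[(i)] the covering property, so that every point is covered for at least \(d+1\) of the \(2d+1\) shifts, uniformly over all ranks \(k\);
\item[(ii)] separation, so that distinct cubes have disjoint images under the sum, preserved under all later refinements.
\end{itemize}
For (ii) I would first try Sprecher's form \(\theta_{q,p}(x)=\lambda_p\,\phi(x+q a)\), with a single increasing \(\phi\) and rationally independent \(\lambda_p\). This reduces separation to a one-dimensional statement about the gaps of \(\phi\) on the rank-\(k\) grid, and the gaps are then controlled by decimal-expansion estimates.
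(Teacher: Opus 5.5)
The paper gives no proof of this statement. Theorem~\ref{thm:kst} is quoted from Kolmogorov's 1957 paper only as architectural motivation, so your outline has to stand on its own as the classical Kolmogorov--Lorentz argument.

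\textbf{Outer functions.} This half is correct. At each $x$, designate exactly $d+1$ indices $q$ whose rank-$k$ cube contains $x$. Bound the other $d$ terms by $\|f\|/(d+1)$, whether or not they are covered. This gives precisely $\frac{1}{2d+2}\|f\|+\frac{d}{d+1}\|f\|=\frac{2d+1}{2d+2}\|f\|$. The series for $\Psi_q$ then converges uniformly. Since separation holds at every rank, choosing a new $k$ for each residual is legitimate.

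\textbf{Inner functions: the gap.} Your description (``constant on rank-$k$ intervals, perturbed slightly at each refinement'') omits a compatibility condition between consecutive ranks. Without it, no inner function can exist.

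\emph{Why the condition is needed.} Write $o$ for the oscillation of $\phi_q$ on a rank-$k$ interval and $\Delta$ for its increase across a rank-$k$ gap. Take two rank-$k$ cubes that share the first-coordinate interval, have adjacent second-coordinate intervals, and agree in all other coordinates. Their images are disjoint only if $\lambda_1 o<\lambda_2\Delta$. Swapping the two coordinates gives $\lambda_2 o<\lambda_1\Delta$. Hence every rank-$k$ oscillation is smaller than every rank-$k$ gap increase. Now suppose some rank-$(k+1)$ interval $J$ contained a whole rank-$k$ gap. Then $\mathrm{osc}_J\phi_q\ge\Delta$. But a rank-$(k+1)$ gap lying inside a rank-$k$ interval has increase at most $o<\Delta$; such gaps exist, since rank-$k$ intervals span several rank-$(k+1)$ periods. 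This violates the same inequality at rank $k+1$. So every rank-$k$ gap of family $q$ must meet a rank-$(k+1)$ gap of the same family. In practice one enforces nesting: each rank-$(k+1)$ interval lies inside a rank-$k$ interval or inside a rank-$k$ gap.

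\emph{Why your base-$10$ scheme fails.}
\begin{itemize}
\item The covering property forces rank-$k$ gaps of length at most $10^{-k}/(2d+1)$.
\item A rank-$k$ gap and a rank-$(k+1)$ gap of family $q$ can therefore meet only if their relative phase lies in a window of length at most $11/(2d+1)$ rank-$(k+1)$ periods.
\item With your shifts, that relative phase is $c_k+9q/(2d+1)$ modulo one period, with $c_k$ independent of $q$.
\item For $d\ge 6$, these $2d+1$ phases cannot all fit in such a window. So some family has every rank-$k$ gap swallowed by a rank-$(k+1)$ interval, whatever offsets you choose.
\end{itemize}
The remedy is standard: use a base $\gamma\ge 2d+2$ with nested intervals, as in Sprecher's construction and the treatment of Braun and Griebel.

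\textbf{Continuity of the limit.} The refinement is also not a small perturbation on the gaps. Passing from rank $k$ to $k+1$ moves the increment across a rank-$k$ gap into the finer gaps it contains. So $\phi_q$ changes there by a fixed fraction of that increment. Continuity must therefore come from forcing the rank-$k$ gap increments themselves to tend to zero (geometrically, say), and this has to be built into the construction explicitly. A summable perturbation tolerance alone does not deliver it.

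\textbf{On Sprecher's single-function form.} If you follow it, note that explicit decimal formulas of this type are delicate. Sprecher's 1996 explicit inner function had to be corrected by K\"oppen, with proofs supplied by Braun and Griebel.
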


For the present work, the important point is not the exact constructive content of Theorem~\ref{thm:kst}, but the architectural principle suggested by \eqref{eq:kst_representation}: coordinate-wise univariate transformations can be combined to represent multivariate functions. 
KANs adopt this principle by placing trainable univariate functions on network edges and summing their outputs at each node. 
Thus, KANs should be viewed as architectures inspired by the Kolmogorov--Arnold structure, rather than as direct numerical realizations of the theorem. 
A more detailed discussion of the relation between KST and KANs can be found in~\cite{AmirKAN}.

This viewpoint is directly relevant to the present paper because the Gaussian scale parameter \(\epsilon\) controls the shape of these univariate edge functions. 
Consequently, scale selection is not merely a local basis-function issue; it affects how the network constructs and propagates representations through its layers.

\subsection{Gaussian KAN Layer Formulation}

Consider a layer \(\ell\) with input width \(n_\ell\) and output width \(n_{\ell+1}\). 
Given an input vector \(x^{(\ell)}\in\mathbb{R}^{n_\ell}\), a KAN layer assigns a univariate function
\[
\psi^{(\ell)}_{j,i}:\mathbb{R}\to\mathbb{R}
\]
to each edge from input coordinate \(i\) to output coordinate \(j\). 
The output \(x^{(\ell+1)}\in\mathbb{R}^{n_{\ell+1}}\) is defined componentwise by
\begin{equation}
x^{(\ell+1)}_j
=
\sum_{i=1}^{n_\ell}
\psi^{(\ell)}_{j,i}\!\left(x^{(\ell)}_i\right),
\qquad
j=1,\dots,n_{\ell+1}.
\label{eq:kan_layer_definition}
\end{equation}
This edge-based structure is the part of the KAN architecture that matters for our analysis: each coordinate is first transformed by a univariate basis expansion, and only afterward are the resulting contributions combined.

In the Gaussian KAN studied here, every edge function in \eqref{eq:kan_layer_definition} is expanded in a shared finite Gaussian basis. 
We fix a set of centers
\begin{equation}
\mathcal{C}
=
\{c_1,\dots,c_G\}
\subset[0,1],
\label{eq:shared_centers}
\end{equation}
and a shared scale parameter
\begin{equation}
\epsilon>0.
\label{eq:shared_epsilon}
\end{equation}
The centers are placed in \([0,1]\) because the original input coordinates are scaled to this interval. 
Although deeper-layer coordinates are not necessarily constrained to remain in \([0,1]\), the same Gaussian feature map is evaluated as a function on all of \(\mathbb{R}\):
\begin{equation}
\varphi:\mathbb{R}\to\mathbb{R}^G,
\qquad
\varphi(t)
=
\begin{bmatrix}
\exp\!\left(-\dfrac{(t-c_1)^2}{\epsilon^2}\right)\\
\vdots\\
\exp\!\left(-\dfrac{(t-c_G)^2}{\epsilon^2}\right)
\end{bmatrix},
\label{eq:gaussian_feature_map}
\end{equation}
where \(G\) is the number of Gaussian centers.

Each edge function is then written as
\begin{equation}
\psi^{(\ell)}_{j,i}(t)
=
\bigl(w^{(\ell)}_{j,i}\bigr)^\top \varphi(t),
\qquad
w^{(\ell)}_{j,i}\in\mathbb{R}^G.
\label{eq:gaussian_edge_function}
\end{equation}
Thus, the nonlinear part of the layer is entirely determined by the Gaussian feature map \(\varphi\), while the trainable dependence on the data enters linearly through the coefficients \(w^{(\ell)}_{j,i}\). 
This separation is important: the scale parameter \(\epsilon\) changes the geometry of the feature matrix before any coefficient learning takes place.

To write the layer compactly, define the stacked feature vector
\begin{equation}
\Phi\!\left(x^{(\ell)}\right)
=
\begin{bmatrix}
\varphi\!\left(x^{(\ell)}_1\right)\\
\vdots\\
\varphi\!\left(x^{(\ell)}_{n_\ell}\right)
\end{bmatrix}
\in\mathbb{R}^{n_\ell G},
\label{eq:stacked_feature_vector}
\end{equation}
and the block coefficient matrix
\begin{equation}
W^{(\ell)}
=
\begin{bmatrix}
\bigl(w^{(\ell)}_{1,1}\bigr)^\top & \cdots & \bigl(w^{(\ell)}_{1,n_\ell}\bigr)^\top\\
\vdots & \ddots & \vdots\\
\bigl(w^{(\ell)}_{n_{\ell+1},1}\bigr)^\top & \cdots & \bigl(w^{(\ell)}_{n_{\ell+1},n_\ell}\bigr)^\top
\end{bmatrix}
\in\mathbb{R}^{\,n_{\ell+1}\times n_\ell G}.
\label{eq:gaussian_layer_matrix}
\end{equation}
Then the Gaussian KAN layer takes the finite-feature form
\begin{equation}
x^{(\ell+1)}
=
W^{(\ell)}
\Phi\!\left(x^{(\ell)}\right).
\label{eq:gaussian_kan_layer_matrix_form}
\end{equation}

For later use, define the layer operator
\begin{equation}
F^{(\ell)}:\mathbb{R}^{n_\ell}\to\mathbb{R}^{n_{\ell+1}},
\qquad
F^{(\ell)}(z)
=
W^{(\ell)}\Phi(z).
\label{eq:layer_operator_definition}
\end{equation}
A deep Gaussian KAN with \(L\) layers is then written as
\begin{equation}
f
=
F^{(L-1)}
\circ
F^{(L-2)}
\circ
\cdots
\circ
F^{(0)},
\qquad
f:[0,1]^{n_0}\to\mathbb{R}^{n_L}.
\label{eq:deep_gaussian_kan}
\end{equation}

This formulation is the bridge between the KAN architecture and the conditioning analysis that follows. 
Every layer has the same algebraic structure: a Gaussian feature map controlled by \(\epsilon\), followed by a linear coefficient map. 
However, the first layer is distinct because its features are evaluated directly on the original input coordinates. 
Therefore, any poor scale choice in the first layer immediately affects the geometry of the input representation before the network has any opportunity to transform the data. 
For this reason, the next section focuses on the first-layer feature map, its induced kernel, and the resulting conditioning behavior.

\section{First-Layer Kernel Structure and Bottleneck Role}\label{sec:first_layer}

\subsection{First-Layer Feature Map and Induced Kernel}

The first layer is the only layer whose Gaussian basis is evaluated directly on the original input domain. 
For an input \(x=(x_1,\dots,x_d)^\top\in[0,1]^d\), the first-layer feature representation is obtained by stacking the one-dimensional Gaussian features coordinate-wise:
\begin{equation}
\Phi(x)
=
\begin{bmatrix}
\varphi(x_1)\\
\vdots\\
\varphi(x_d)
\end{bmatrix}
\in\mathbb{R}^{dG}.
\label{eq:first_layer_stacked_feature}
\end{equation}
The first hidden representation is therefore
\begin{equation}
x^{(1)}
=
W^{(0)}\Phi(x),
\label{eq:first_layer_map}
\end{equation}
where \(W^{(0)}\in\mathbb{R}^{n_1\times dG}\). 
Thus, the first layer is a finite-dimensional feature map followed by a linear transformation.

The associated one-dimensional kernel induced by the Gaussian feature map \eqref{eq:gaussian_feature_map} is
\begin{equation}
k(s,t)
=
\varphi(s)^\top\varphi(t),
\qquad s,t\in[0,1].
\label{eq:first_layer_scalar_kernel}
\end{equation}
Accordingly, the first-layer kernel between two inputs \(x,x'\in[0,1]^d\) is
\begin{equation}
K_0(x,x')
=
\Phi(x)^\top \Phi(x').
\label{eq:first_layer_kernel_definition}
\end{equation}
Using the stacked structure in \eqref{eq:first_layer_stacked_feature}, this becomes
\begin{equation}
K_0(x,x')
=
\sum_{i=1}^{d}\varphi(x_i)^\top\varphi(x_i')
=
\sum_{i=1}^{d} k(x_i,x_i').
\label{eq:first_layer_kernel_sum}
\end{equation}
Hence the first Gaussian KAN layer induces an additive kernel over the input coordinates. 
This representation is exact and directly reflects the edge-wise structure of KANs in \eqref{eq:kan_layer_definition}: each coordinate is processed separately, and the resulting contributions are summed.

For later conditioning analysis, consider a finite sample set
\begin{equation}
\mathcal{X}
=
\{x^{1},\dots,x^{N}\}
\subset[0,1]^d.
\label{eq:sample_set_definition}
\end{equation}
For each coordinate \(i=1,\dots,d\), define the feature block
\begin{equation}
\Phi^{(i)}
=
\begin{bmatrix}
\varphi\!\left(x_i^{1}\right)^\top\\
\vdots\\
\varphi\!\left(x_i^{N}\right)^\top
\end{bmatrix}
\in\mathbb{R}^{N\times G}.
\label{eq:first_layer_coordinate_matrix}
\end{equation}
Stacking these blocks horizontally gives the full first-layer feature matrix
\begin{equation}
\Phi
=
\begin{bmatrix}
\Phi^{(1)} & \cdots & \Phi^{(d)}
\end{bmatrix}
\in\mathbb{R}^{N\times dG}.
\label{eq:full_first_layer_matrix}
\end{equation}
The empirical first-layer kernel matrix is then
\begin{equation}
K_0
=
\Phi\Phi^\top
\in\mathbb{R}^{N\times N},
\label{eq:empirical_first_layer_kernel}
\end{equation}
and, by the block structure of \(\Phi\),
\begin{equation}
K_0
=
\sum_{i=1}^{d}
\Phi^{(i)}\bigl(\Phi^{(i)}\bigr)^\top.
\label{eq:empirical_first_layer_kernel_sum}
\end{equation}
Thus, the empirical first-layer kernel matrix is the sum of the coordinate-wise Gram matrices. 
This matrix is the starting point for the later analysis of rank, conditioning, and admissible Gaussian scales.

\subsection{The First Layer as a Representational Bottleneck}
\label{subsec:first_layer_bottleneck}

Because the first layer acts directly on the original variables, any loss of distinguishability introduced at this stage is inherited by all later layers. 
Writing the full network as
\begin{equation}
f
=
T\circ G,
\qquad
G(x)=W^{(0)}\Phi(x),
\label{eq:network_factorization_first_layer}
\end{equation}
where \(T=F^{(L-1)}\circ\cdots\circ F^{(1)}\) denotes the composition of the remaining layers, makes this dependence explicit. 
The downstream map \(T\) only receives the first hidden representation \(G(x)\), not the original input \(x\).

\begin{proposition}[First-layer bottleneck]
\label{prop:first_layer_bottleneck}
Let \(f\) be the Gaussian KAN written as in \eqref{eq:network_factorization_first_layer}. Then:
\begin{enumerate}
\item[(i)] If two inputs \(x,x'\in[0,1]^d\) satisfy
\begin{equation}
G(x)=G(x'),
\label{eq:same_first_layer_output}
\end{equation}
then
\begin{equation}
f(x)=f(x').
\label{eq:same_final_output}
\end{equation}

\item[(ii)] If two inputs \(x,x'\in[0,1]^d\) satisfy
\begin{equation}
\Phi(x)=\Phi(x'),
\label{eq:same_first_layer_feature}
\end{equation}
then they induce the same first-layer kernel section,
\begin{equation}
K_0(x,z)=K_0(x',z),
\qquad
z\in[0,1]^d,
\label{eq:same_kernel_section}
\end{equation}
and therefore
\begin{equation}
f(x)=f(x')
\label{eq:same_output_from_same_feature}
\end{equation}
for every choice of \(W^{(0)}\) and every downstream map \(T\).

\item[(iii)] On a finite sample set \(\mathcal{X}=\{x_j\}_{j=1}^N\subset[0,1]^d\), as \(\epsilon\to\infty\),
\begin{equation}
\Phi(x)\to \mathbf{1}_{dG}
\qquad
\text{for every }x\in[0,1]^d,
\label{eq:pointwise_feature_collapse}
\end{equation}
and hence
\begin{equation}
\Phi
\to
\mathbf{1}_{N}\mathbf{1}_{dG}^{\top}.
\label{eq:matrix_feature_collapse}
\end{equation}
Consequently,
\begin{equation}
K_0=\Phi\Phi^\top
\to
dG\,\mathbf{1}_{N}\mathbf{1}_{N}^{\top},
\label{eq:kernel_matrix_collapse}
\end{equation}
so the empirical first-layer kernel collapses to a rank-one limit. 
Moreover,
\begin{equation}
G(x)
=
W^{(0)}\Phi(x)
\to
W^{(0)}\mathbf{1}_{dG},
\label{eq:first_hidden_constant_limit}
\end{equation}
which means that the first hidden representation becomes constant across the sample set.
\end{enumerate}
\end{proposition}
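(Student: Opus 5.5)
The plan is to treat the three parts separately. Each reduces to the composition structure in \eqref{eq:network_factorization_first_layer} together with the explicit Gaussian feature map \eqref{eq:gaussian_feature_map}.

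For part (i), I would use only that \(f=T\circ G\) is a composition of deterministic maps. If \(G(x)=G(x')\), then \(f(x)=T(G(x))=T(G(x'))=f(x')\); there is nothing else to verify.

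For part (ii), I would start from \(\Phi(x)=\Phi(x')\). For any \(z\), the definition \eqref{eq:first_layer_kernel_definition} gives \(K_0(x,z)=\Phi(x)^\top\Phi(z)=\Phi(x')^\top\Phi(z)=K_0(x',z)\). Next, for any \(W^{(0)}\), we have \(G(x)=W^{(0)}\Phi(x)=W^{(0)}\Phi(x')=G(x')\). Invoking (i) then gives \(f(x)=f(x')\) for every \(T\). It may be worth remarking that, since \(c_1<\dots<c_G\) are distinct and \(t\mapsto \exp(-(t-c_1)^2/\epsilon^2)\) is strictly monotone on each side of \(c_1\), exact equality \(\Phi(x)=\Phi(x')\) forces \(x=x'\) whenever \(G\ge 2\). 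So (ii) is really a statement about near-coincidence, which motivates (iii). This remark is not needed for the proof.

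For part (iii), I would fix \(t\in[0,1]\) and \(c_k\in[0,1]\), so that \(|t-c_k|\le 1\). Then
\[
1\ge \exp\!\left(-\frac{(t-c_k)^2}{\epsilon^2}\right)\ge \exp(-\epsilon^{-2})\to 1 \qquad (\epsilon\to\infty).
\]
This bound is uniform in \(t\) and \(k\), and it gives \(\varphi(t)\to\mathbf{1}_G\) uniformly on \([0,1]\), hence \(\Phi(x)\to\mathbf{1}_{dG}\) uniformly on \([0,1]^d\). Since \(\mathcal X\) is finite, each row of the feature matrix converges, which yields \(\Phi\to\mathbf{1}_N\mathbf{1}_{dG}^\top\). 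By continuity of matrix multiplication, \(K_0=\Phi\Phi^\top\to\mathbf{1}_N\mathbf{1}_{dG}^\top\mathbf{1}_{dG}\mathbf{1}_N^\top=dG\,\mathbf{1}_N\mathbf{1}_N^\top\), which is visibly rank one. For fixed \(W^{(0)}\), linearity gives \(G(x)\to W^{(0)}\mathbf{1}_{dG}\) at every sample point, and this limit is independent of \(x\).

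The main obstacle is interpretive rather than technical: the rank-one statement concerns the limit matrix. Near the limit, \(K_0\) has full rank but becomes severely ill-conditioned. I would note that rank is only lower semicontinuous, so the conclusion should be read as convergence to a rank-one matrix and not as a rank drop at finite \(\epsilon\). Optionally, I would quantify the approach to the limit with the bound \(\|\Phi-\mathbf{1}_N\mathbf{1}_{dG}^\top\|_{\max}\le 1-e^{-\epsilon^{-2}}\le \epsilon^{-2}\). Everything else is immediate.
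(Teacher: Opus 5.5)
Your proof is correct and follows the paper's argument in Appendix~\ref{app:first_layer_bottleneck_proof} essentially line by line: the composition $f=T\circ G$ for (i), substitution into $K_0(x,z)=\Phi(x)^\top\Phi(z)$ followed by (i) for (ii), and entrywise limits for (iii); your bound $\exp(-\epsilon^{-2})$ upgrades the paper's pointwise limit to a uniform one, and your remark that $\Phi$ is injective for $G\ge 2$ (so the hypothesis of (ii) forces $x=x'$) is also correct. One aside needs correcting: $K_0=\Phi\Phi^\top\in\mathbb{R}^{N\times N}$ has rank at most $dG$, so in the paper's regime $N>dG$ it is singular for every finite $\epsilon$, and the picture you describe (full rank but severely ill-conditioned near the limit) applies to $\Phi$, which is generically of full column rank $dG$, rather than to $K_0$.
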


The proof of Proposition~\ref{prop:first_layer_bottleneck} is given in Appendix~\ref{app:first_layer_bottleneck_proof}. 
The proposition formalizes the special role of the first layer: once two inputs become indistinguishable after the first-layer feature map, no later layer can separate them. 
In particular, large values of the shared Gaussian scale \(\epsilon\) force the first-layer features toward a rank-one limit, causing collapse of both the empirical kernel matrix and the hidden representation. 
For this reason, the numerical behavior of the Gaussian KAN is governed primarily by the geometry and conditioning of the first-layer feature matrix.

\subsection{Experimental Setup and Benchmark Target Functions}

Unless stated otherwise, we use a Gaussian KAN with architecture
\[
[2,12,12,1],
\]
that is, a two-dimensional input, two hidden layers of width \(12\), and a one-dimensional output. The Gaussian scale parameter \(\epsilon\) is swept over \(100\) logarithmically spaced values in
\[
\epsilon \in [5\times 10^{-3},\,5].
\]
All sweep experiments are trained for \(10000\) epochs using full-batch AdamW with a learning rate of \(10^{-3}\).

The Gaussian centers are fixed and uniformly distributed on \([0,1]\),
\[
c_g=\frac{g-1}{G-1},
\qquad g=1,\ldots,G,
\]
and the trainable coefficients are initialized as
\[
a_{j,g}\sim
\mathcal{N}\!\left(
0,\frac{1}{(d_{\mathrm{in}}G)^2}
\right),
\]
equivalently with standard deviation \(1/(d_{\mathrm{in}}G)\), where \(d_{\mathrm{in}}\) is the input dimension and \(G\) is the number of Gaussian centers. We also tested Xavier-normal initialization,
\[
a_{j,g}\sim
\mathcal{N}\!\left(
0,\frac{2}{d_{\mathrm{in}}G+d_{\mathrm{out}}}
\right),
\]
but the above initialization was more stable in our experiments. Both initializations produced similar scale-dependent behavior, as expected, because the proposed criterion is based on the first-layer feature matrix before the coefficients are applied and is therefore independent of coefficient initialization.

We further compared AdamW and L-BFGS and found similar error trends and useful scale regions. The proposed conditioning-based rule and scale-selection strategies remained effective for both optimizers, since the first-layer feature matrix is constructed before optimization and does not depend on the selected optimizer.

The dense sweep is used as a brute-force reference, allowing the proposed conditioning-based interval to be evaluated against the best empirical choices observed within the tested search range.

Training points are sampled on \([0,1]^2\) using Halton sequences; see Fasshauer~\cite{Fasshauer06} and the discussion in Fasshauer and Iske~\cite{Fasshauer12}. Halton points are low-discrepancy sequences that provide a more uniform coverage of the domain than purely random samples. Such point sets are widely used in radial basis function meshfree methods.

For each experiment, \(4\) or \(5\) random seeds are used, accounting for both the random initialization of the coefficients and the point distribution. 
In the plots, the solid curve shows the geometric mean of the error over the seeds in log scale, while the shaded region indicates the minimum and maximum values across seeds.

As the main error measure, we use the validation root mean square error (RMSE), defined by
\begin{equation}
\mathrm{RMSE}
=
\left(
\frac{1}{M}
\sum_{i=1}^{M}
\bigl(
\hat{u}(x_i,y_i)-u(x_i,y_i)
\bigr)^2
\right)^{1/2},
\end{equation}
where \(\{(x_i,y_i)\}_{i=1}^M\) are the validation points, \(u\) is the target function, and \(\hat{u}\) is the network prediction.

We consider four benchmark target functions, denoted by F1, F2, F3, and F4. 
They were chosen to represent smooth localized structure, periodic oscillation, sharp variation, and discontinuity. 
The functions F1, F2, and F4 are defined on \((x,y)\in[0,1]^2\), whereas F3 is defined on \((x,y)\in[-1,1]^2\).

\begin{itemize}
    \item F1: smooth multi-bump function
    \begin{equation}
    \begin{aligned}
    F1(x,y)
    ={}&0.75\exp\!\left(-\frac{(9x-2)^2+(9y-2)^2}{4}\right)
    +0.75\exp\!\left(-\frac{(9x+1)^2}{49}-\frac{9y+1}{10}\right)
    \\
    &+0.5\exp\!\left(-\frac{(9x-7)^2+(9y-3)^2}{4}\right)
    -0.2\exp\!\left(-\big((9x-4)^2+(9y-7)^2\big)\right).
    \end{aligned}
    \end{equation}

    \item F2: periodic oscillatory function
    \begin{equation}
    F2(x,y)=\sin(4\pi x)\sin(4\pi y).
    \end{equation}

    \item F3: smooth function with a sharp localized peak
    \begin{equation}
    F3(x,y)=\frac{1}{1+10^3(x^2-0.25)^2(y^2-0.25)^2},
    \qquad (x,y)\in[-1,1]^2.
    \end{equation}

    \item F4: discontinuous piecewise oscillatory function
    \begin{equation}
    F4(x,y)=g(x)\left(1+0.15\sin(2\pi y)\right),
    \end{equation}
    where
    \begin{equation}
    g(x)=
    \begin{cases}
    5+\displaystyle\sum_{k=1}^{4}\sin(2\pi kx), & x<0.5,\\[6pt]
    \cos(20\pi x), & x\ge 0.5.
    \end{cases}
    \end{equation}
\end{itemize}

Figure~\ref{target_functions} shows the surfaces of these four target functions.

\begin{figure}[t]
    \centering
    \includegraphics[width=7in]{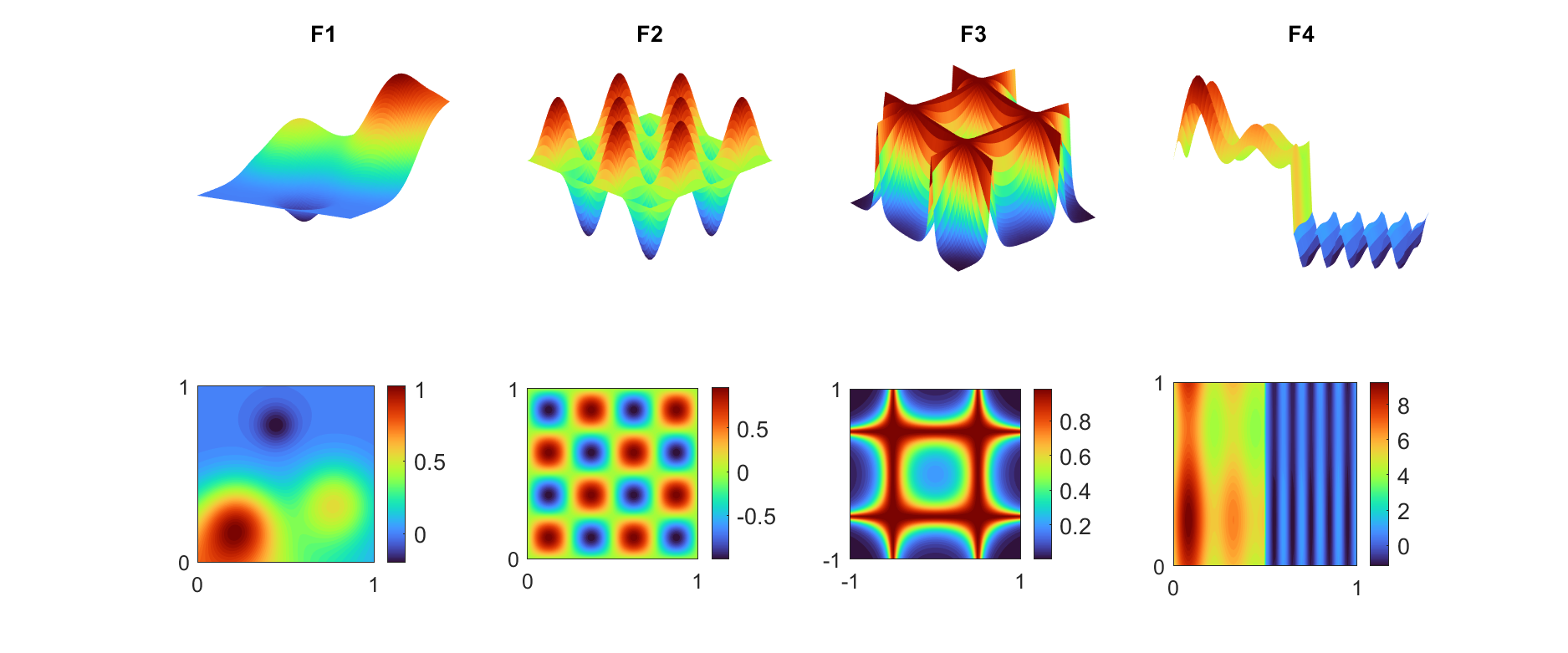}
    \caption{Three-dimensional surfaces of the four target functions used in the regression experiments: F1 (smooth multi-bump), F2 (periodic oscillatory), F3 (sharp localized peak), and F4 (discontinuous piecewise oscillatory). These examples span different approximation challenges, including smooth variation, repeated oscillation, strong localization, and discontinuity.}
    \label{target_functions}
\end{figure}

\subsection{Numerical Verification of First-Layer Dominance}

We now provide numerical evidence for Proposition~\ref{prop:first_layer_bottleneck}.

\paragraph{(i) Layer-wise sensitivity to the Gaussian scale.}
We first vary the Gaussian scale in one layer at a time for the architecture
\[
[2,12,12,1],
\]
which contains three Gaussian mappings,
\[
2\to12,\qquad 12\to12,\qquad 12\to1.
\]
Their scale parameters are denoted by
\begin{equation}
\boldsymbol{\epsilon}
=
(\epsilon_1,\epsilon_2,\epsilon_3).
\label{eq:layerwise_epsilon_vector}
\end{equation}

The non-varied layers are fixed at \(\epsilon=0.1\), and the following cases are compared:
\begin{equation}
(\epsilon,0.1,0.1),\qquad
(0.1,\epsilon,0.1),\qquad
(0.1,0.1,\epsilon),\qquad
(\epsilon,\epsilon,\epsilon).
\label{eq:layerwise_epsilon_protocol}
\end{equation}
The last case uses a shared scale in all layers.

Figure~\ref{fig:layerwise_sensitivity} reports the validation RMSE for the smooth target \(F1\) and the discontinuous target \(F4\). In both cases, varying the first-layer scale produces an error profile close to that of the shared-scale model, whereas varying only the deeper-layer scales has a much weaker effect. This confirms that the first layer carries the dominant sensitivity to \(\epsilon\). The results also show that one properly selected shared scale is sufficient, since separate layer-wise scales do not provide a clear accuracy improvement.

\paragraph{(ii) Conditioning and controlled layer collapse.}
We next consider the deeper architecture
\[
[2,12,12,12,1],
\]
which contains four Gaussian layers. One layer at a time is forced into the large-\(\epsilon\) collapse regime of Proposition~\ref{prop:first_layer_bottleneck}, while all remaining layers use \(\epsilon=0.1\).

The left panel of Figure~\ref{fig:cond_collapse_combined} shows the resulting validation RMSE for \(F1\). Collapsing earlier layers causes substantially greater error, with the largest degradation occurring when the first layer is collapsed. Collapse in later layers is less damaging.

The right panel shows the spectral condition number \(\kappa(\Phi_\ell)\) of each layer in the uncollapsed network. The first layer exhibits a clear well-conditioned region, while the deeper layers remain highly ill-conditioned over most of the sweep. The apparent reduction in their condition numbers beyond approximately \(\epsilon=0.3\) is a \texttt{float32} finite-precision artifact caused by unresolved small singular values. In \texttt{float64}, this artificial drop is delayed or disappears.

Together, Figures~\ref{fig:layerwise_sensitivity} and~\ref{fig:cond_collapse_combined} support Proposition~\ref{prop:first_layer_bottleneck}: the first layer is the main scale-sensitive bottleneck, and representation loss introduced there has the strongest effect on the final accuracy.

\begin{figure}[hbt!]
\centering
\includegraphics[width=6in]{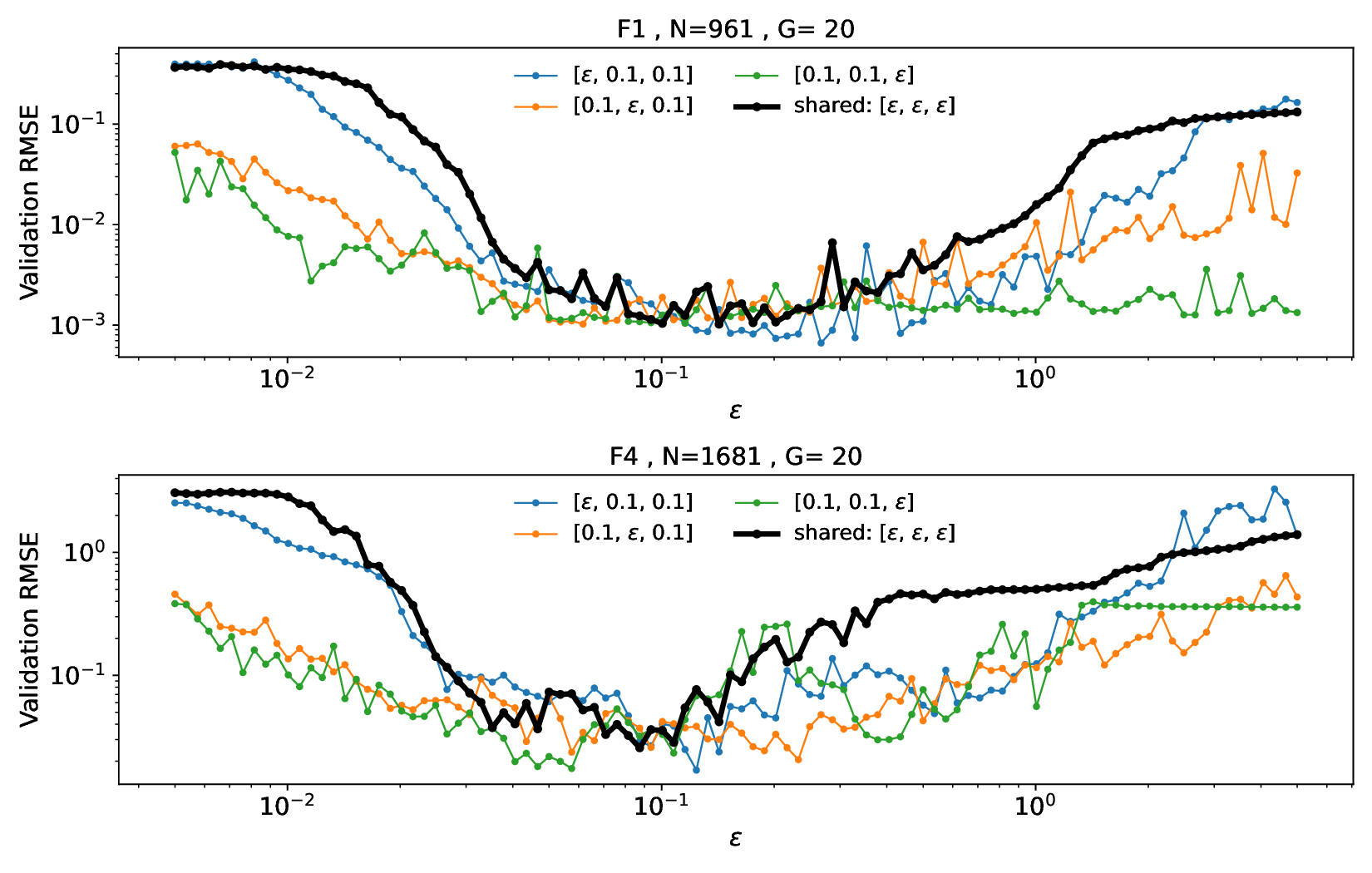}
\caption{
Layer-wise sensitivity of the validation RMSE to the Gaussian scale for the architecture \(2\to 12\to 12\to 1\). 
The reference value is \(\epsilon_0=0.1\), and the four schedules are \((\epsilon,\epsilon_0,\epsilon_0)\), \((\epsilon_0,\epsilon,\epsilon_0)\), \((\epsilon_0,\epsilon_0,\epsilon)\), and \((\epsilon,\epsilon,\epsilon)\). 
For both \(F1\) and \(F4\), varying the first-layer scale \((\epsilon,\epsilon_0,\epsilon_0)\) tracks the all-layer response \((\epsilon,\epsilon,\epsilon)\) much more closely than varying only the deeper layers. This confirms that the first layer carries the dominant sensitivity to the Gaussian scale.
}
\label{fig:layerwise_sensitivity}
\end{figure}

\begin{figure}[hbt!]
\centering
\includegraphics[width=7.5in]{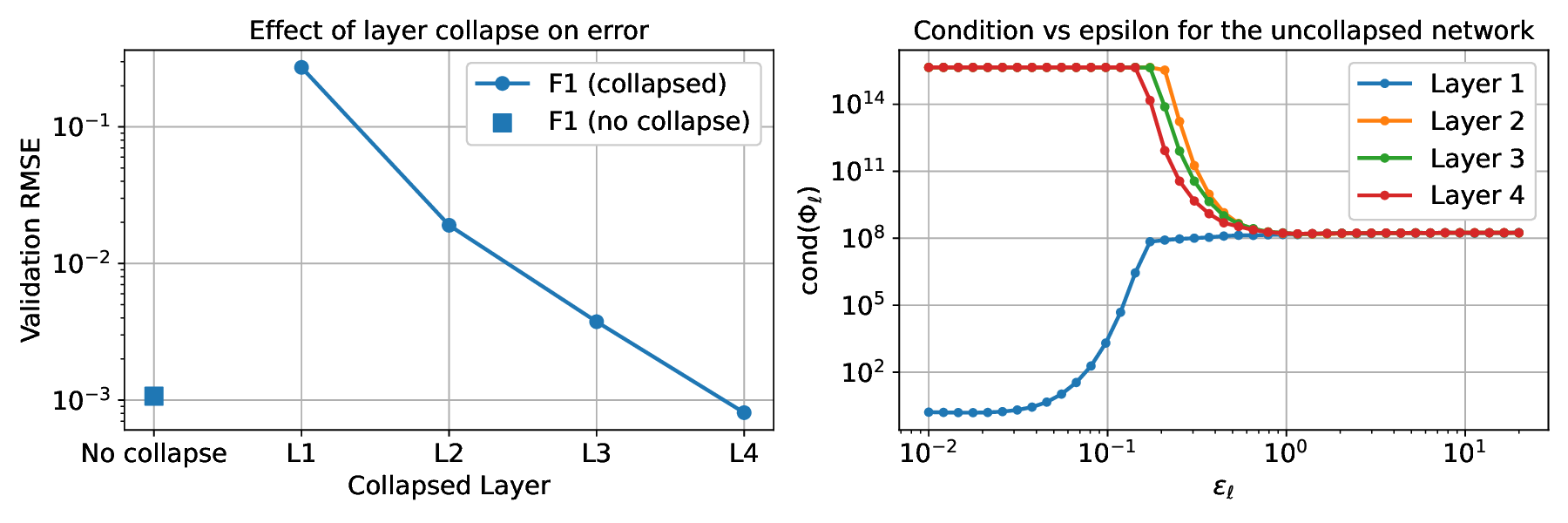}
\caption{
Numerical evidence for first-layer dominance for the architecture \([2,12,12,12,1]\). 
\textbf{Left:} validation RMSE after forcing one layer at a time into the large-\(\epsilon\) collapse regime, while keeping the remaining layers fixed at \(\epsilon=0.1\). The no-collapse case corresponds to using \(\epsilon=0.1\) in all layers. Collapsing earlier layers, especially the first one, causes the largest increase in error. 
\textbf{Right:} spectral condition number \(\kappa(\Phi_\ell)\) of the feature matrix in each layer versus \(\epsilon\) for the uncollapsed network. The first layer is the only one with a clear low-conditioning regime.
}
\label{fig:cond_collapse_combined}
\end{figure}

\section{First-Layer Conditioning and a Practical Gaussian Scale Interval}\label{sec:conditioning_interval}

The goal of this section is to identify a numerically stable operating regime for the first-layer Gaussian feature construction and to translate that regime into a practical rule for choosing the Gaussian scale parameter. 
Since the empirical first-layer kernel matrix in \eqref{eq:empirical_first_layer_kernel} is generated by the pre-weighting feature matrix
\[
\Phi \in \mathbb{R}^{N\times dG},
\]
defined in \eqref{eq:full_first_layer_matrix}, the relevant object for stability analysis is the feature matrix \(\Phi\) itself, or equivalently its coordinate-wise blocks \(\Phi^{(i)}\), rather than the learned coefficient matrix applied afterward.

\subsection{Conditioning Diagnostics for the First-Layer Feature Matrix}
\label{subsec:first_layer_conditioning_diagnostics}

To see how the Gram matrix arises, consider one output component and let \(y\in\mathbb{R}^N\) denote the target values on the sample set \(\mathcal{X}\). 
For a fixed first-layer feature matrix \(\Phi\), fitting coefficients by least squares gives
\begin{equation}
L(w)
=
\frac{1}{2}\|\Phi w-y\|_2^2,
\qquad
w\in\mathbb{R}^{dG}.
\label{eq:least_squares_objective}
\end{equation}
The corresponding normal equations are
\begin{equation}
\Phi^\top \Phi\, w
=
\Phi^\top y,
\label{eq:normal_equations}
\end{equation}
so the Gram matrix \(\Phi^\top\Phi\) appears naturally in coefficient recovery. 
However, column dependence, numerical rank loss, and basis degeneracy originate in \(\Phi\) before they are inherited and amplified by \(\Phi^\top\Phi\).

Let
\begin{equation}
\sigma_1(\Phi)\ge \sigma_2(\Phi)\ge \cdots \ge \sigma_r(\Phi)>0
\label{eq:singular_values_phi}
\end{equation}
denote the positive singular values of \(\Phi\), where \(r=\operatorname{rank}(\Phi)\). 
We define the spectral condition number by
\begin{equation}
\kappa(\Phi)
=
\frac{\sigma_{\max}(\Phi)}{\sigma_{\min}(\Phi)}
=
\frac{\sigma_1(\Phi)}{\sigma_r(\Phi)},
\label{eq:condition_number_phi}
\end{equation}
when \(\Phi\) has full column rank, and set \(\kappa(\Phi)=\infty\) otherwise. 
Similarly,
\begin{equation}
\kappa(\Phi^\top\Phi)
=
\frac{\lambda_{\max}(\Phi^\top\Phi)}
{\lambda_{\min}(\Phi^\top\Phi)}
\label{eq:condition_number_gram}
\end{equation}
when \(\Phi^\top\Phi\) is nonsingular, and \(\kappa(\Phi^\top\Phi)=\infty\) otherwise.

\begin{proposition}[Relation between feature-matrix and Gram-matrix conditioning]
\label{prop:gram_vs_feature_conditioning}
Let \(\Phi\in\mathbb{R}^{N\times M}\), with \(M=dG\). Then:
\begin{enumerate}
\item[(i)] If \(\Phi\) has full column rank, then
\begin{equation}
\kappa(\Phi^\top\Phi)
=
\kappa(\Phi)^2.
\label{eq:gram_feature_condition_square}
\end{equation}

\item[(ii)] If \(\Phi\) is rank deficient, then \(\Phi^\top\Phi\) is singular and
\begin{equation}
\kappa(\Phi)
=
\kappa(\Phi^\top\Phi)
=
\infty.
\label{eq:gram_feature_condition_infinite}
\end{equation}
\end{enumerate}
\end{proposition}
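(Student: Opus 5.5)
The argument is a direct use of the singular value decomposition of \(\Phi\). Write \(\Phi = U\Sigma V^\top\) with \(U\in\mathbb{R}^{N\times N}\) and \(V\in\mathbb{R}^{M\times M}\) orthogonal. Here \(\Sigma\in\mathbb{R}^{N\times M}\) carries the singular values \(\sigma_1\ge\cdots\ge\sigma_{\min(N,M)}\ge 0\) on its diagonal, and exactly \(r=\operatorname{rank}(\Phi)\) of them are positive. Then
\[
\Phi^\top\Phi = V\,(\Sigma^\top\Sigma)\,V^\top .
\]
This is an orthogonal diagonalization of the symmetric positive semidefinite matrix \(\Phi^\top\Phi\). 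Its eigenvalues are therefore the diagonal entries of \(\Sigma^\top\Sigma\in\mathbb{R}^{M\times M}\): the numbers \(\sigma_j^2\) for \(j\le\min(N,M)\), padded with \(M-\min(N,M)\) zeros when \(M>N\).

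For part (i), I assume \(\Phi\) has full column rank. Then \(r=M\), which forces \(N\ge M\), and all \(M\) singular values are positive. The eigenvalues of \(\Phi^\top\Phi\) are then exactly \(\sigma_1^2\ge\cdots\ge\sigma_M^2>0\), so \(\Phi^\top\Phi\) is nonsingular. Using definition \eqref{eq:condition_number_gram}, this gives
\[
\kappa(\Phi^\top\Phi)=\frac{\sigma_1^2}{\sigma_M^2}=\left(\frac{\sigma_1}{\sigma_M}\right)^2 .
\]
By \eqref{eq:condition_number_phi} with \(r=M\), the right-hand side equals \(\kappa(\Phi)^2\). This is \eqref{eq:gram_feature_condition_square}.

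For part (ii), I assume \(\Phi\) is rank deficient, i.e.\ \(r<M\). By the convention stated after \eqref{eq:condition_number_phi}, \(\kappa(\Phi)=\infty\) directly. Rank deficiency also gives a nonzero \(w\) with \(\Phi w=0\), and then \(\Phi^\top\Phi w=0\). So \(\Phi^\top\Phi\) is singular, and its convention gives \(\kappa(\Phi^\top\Phi)=\infty\). Conversely, if \(\Phi^\top\Phi w=0\), then \(\|\Phi w\|_2^2=w^\top\Phi^\top\Phi w=0\), so \(\ker\Phi^\top\Phi=\ker\Phi\). This shows the two singularity conditions coincide, so the case split between (i) and (ii) is exhaustive and consistent.

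I do not expect a real obstacle. The only point needing care is bookkeeping when \(N<M\): then \(\Phi\) cannot have full column rank, and \(\Phi^\top\Phi\) has forced zero eigenvalues. These cases fall automatically into (ii), and the kernel identity \(\ker\Phi^\top\Phi=\ker\Phi\) handles them without a separate dimension count.
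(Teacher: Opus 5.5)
Your proposal is correct and takes essentially the same route as the paper: an SVD of \(\Phi\) shows that the eigenvalues of \(\Phi^\top\Phi\) are the squared singular values, which gives \(\kappa(\Phi^\top\Phi)=\kappa(\Phi)^2\) under full column rank, and both condition numbers are infinite by convention otherwise. Your null-vector argument for singularity in part (ii), together with the zero padding when \(N<M\), is a slightly more careful version of the paper's remark that some singular value vanishes.
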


The proof of Proposition~\ref{prop:gram_vs_feature_conditioning} is given in Appendix~\ref{app:gram_feature_conditioning_proof}. 
The proposition shows that the Gram matrix does not introduce a new instability mechanism; it squares the instability already present in the feature matrix. 
For this reason, the primary diagnostic for the first-layer feature construction is \(\kappa(\Phi)\), not \(\kappa(\Phi^\top\Phi)\).

The numerical quality of the first-layer basis is determined by the extreme singular values
\begin{equation}
\sigma_{\max}(\Phi)=\sigma_1(\Phi),
\qquad
\sigma_{\min}(\Phi)=\sigma_r(\Phi).
\label{eq:extreme_singular_values}
\end{equation}
To assess numerical rank, we use the standard floating-point tolerance
\begin{equation}
\mathrm{tol}
=
\max(N,M)\,\sigma_{\max}(\Phi)\,\varepsilon_{\mathrm{mach}},
\qquad
M=dG.
\label{eq:numerical_rank_tolerance}
\end{equation}
Accordingly, the first-layer feature matrix is numerically full rank when
\begin{equation}
\sigma_{\min}(\Phi)>\mathrm{tol}.
\label{eq:numerical_full_rank_condition}
\end{equation}

In float32 arithmetic, a practical additional requirement is that the feature matrix should remain moderately conditioned. 
Motivated by Proposition~\ref{prop:gram_vs_feature_conditioning}, we impose
\begin{equation}
\kappa(\Phi)<3\times 10^3.
\label{eq:practical_kappa_threshold}
\end{equation}
This corresponds to
\begin{equation}
\kappa(\Phi^\top\Phi)<9\times 10^6,
\label{eq:gram_threshold_float32}
\end{equation}
which remains within a conservative numerical range for float32-based training.

Combining \eqref{eq:numerical_full_rank_condition} and \eqref{eq:practical_kappa_threshold}, we define the first-layer diagnostic criteria as
\begin{equation}
\sigma_{\min}(\Phi)>\mathrm{tol}
\qquad\text{and}\qquad
\kappa(\Phi)<3\times 10^3.
\label{eq:first_layer_diagnostic_criteria}
\end{equation}
The first condition detects numerical rank loss, while the second controls near-dependence before rank is lost. 
Together, they identify the stable operating regime of the first-layer feature construction. 
These diagnostics may be applied either to the full matrix \(\Phi\) or to the coordinate-wise blocks \(\Phi^{(i)}\), but in all cases the assessment is performed before weighting.

\subsection{Derivation of a Practical Gaussian Scale Interval}

We seek a practical interval of the form
\begin{equation}
\epsilon
\in
\left[
\frac{1}{G-1},
\epsilon_{\mathrm{cond}}(N,G)
\right],
\label{eq:target_interval_form}
\end{equation}
where \(G\) is the number of Gaussian centers per coordinate. The lower bound is the spacing between adjacent uniformly distributed centers,
\begin{equation}
\epsilon_{\min}
=
\frac{1}{G-1}.
\label{eq:eps_spacing_reference}
\end{equation}
Below this scale, the Gaussian basis becomes excessively localized. The upper bound is determined from the conditioning of the first-layer feature matrix \(\Phi(\epsilon)\). As \(\epsilon\) increases, the Gaussian columns become more similar, \(\Phi(\epsilon)\) becomes ill-conditioned, and the representation approaches the collapse described in Proposition~\ref{prop:first_layer_bottleneck}.

We use two markers to locate this transition. The first is the scale at which the condition number is closest to the practical threshold \(3\times10^3\),
\begin{equation}
\epsilon_{\kappa}
=
\arg\min_{\epsilon>0}
\left|
\log \kappa\!\bigl(\Phi(\epsilon)\bigr)
-
\log\!\left(3\times10^3\right)
\right|.
\label{eq:eps_kappa_marker}
\end{equation}
The second is the onset of numerical rank loss,
\begin{equation}
\epsilon_{\mathrm{rank}}
=
\inf\left\{
\epsilon>0:
\sigma_{\min}\!\bigl(\Phi(\epsilon)\bigr)
\leq
\mathrm{tol}(\epsilon)
\right\}.
\label{eq:eps_rank_marker}
\end{equation}

Figure~\ref{fig:Cond_safe} shows that \(\epsilon_{\kappa}\) and \(\epsilon_{\mathrm{rank}}\) are nearly coincident in all four test cases. The minimum validation RMSE also occurs close to this transition. For small \(\epsilon\), insufficient overlap between neighboring Gaussian functions limits the approximation. As \(\epsilon\) increases, the error decreases until the feature matrix approaches the conditioning boundary. Beyond this point, the condition number grows rapidly, numerical rank is lost, and the error deteriorates.

\begin{figure}[hbt!]
\centering
\subfigure[]{\label{eps_diag_F1_N961_G10}
\includegraphics[width=3in]{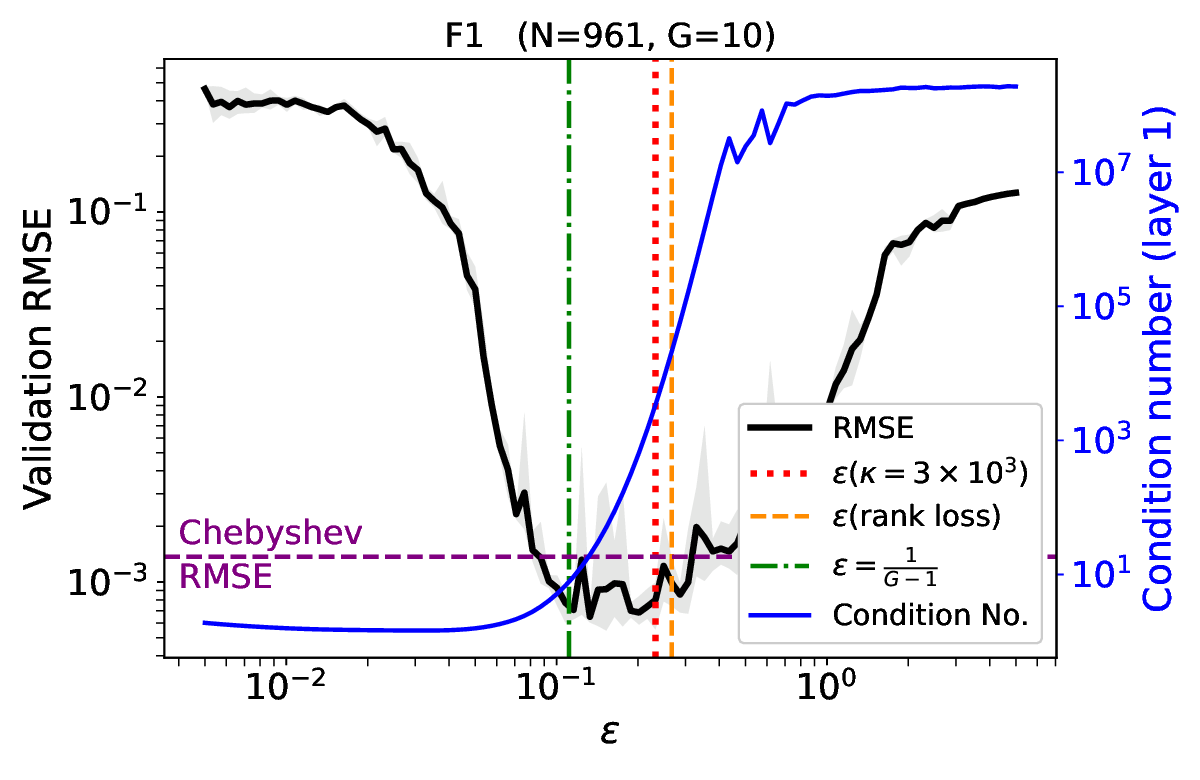}}
\subfigure[]{\label{eps_diag_F11_N961_G14}
\includegraphics[width=3in]{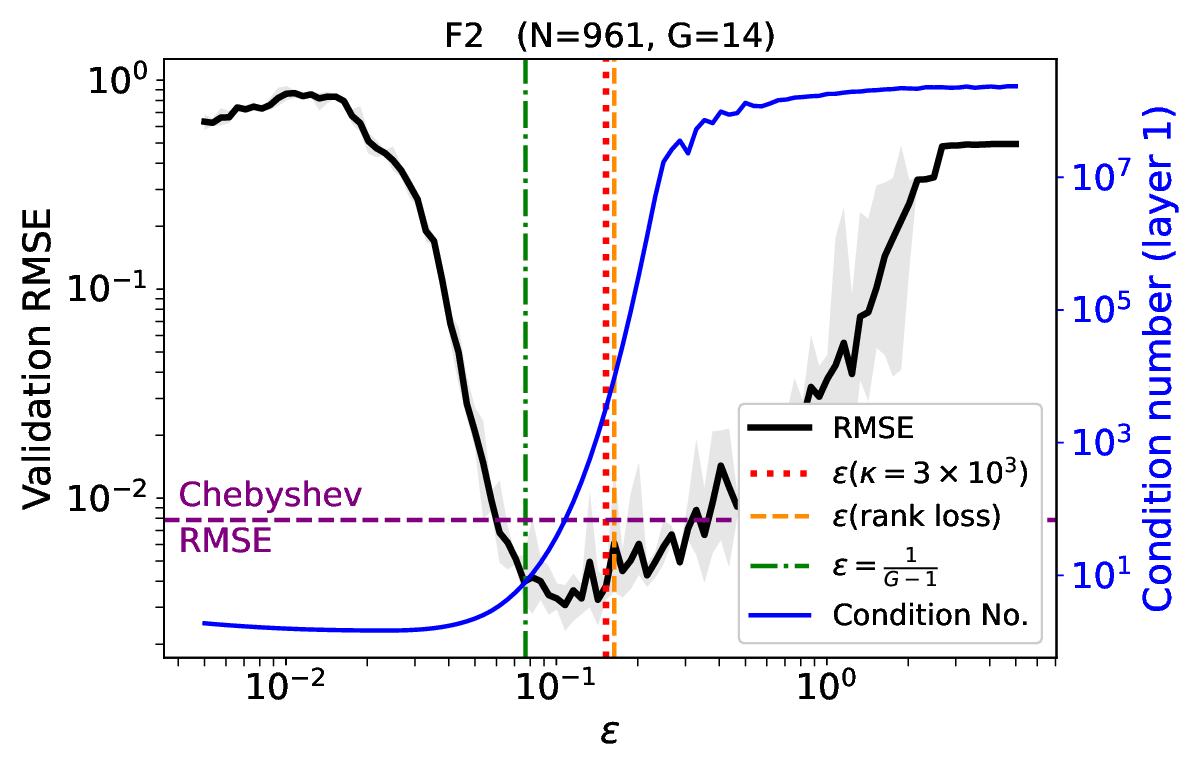}}
\subfigure[]{\label{eps_diag_F20_N1681_G16}
\includegraphics[width=3in]{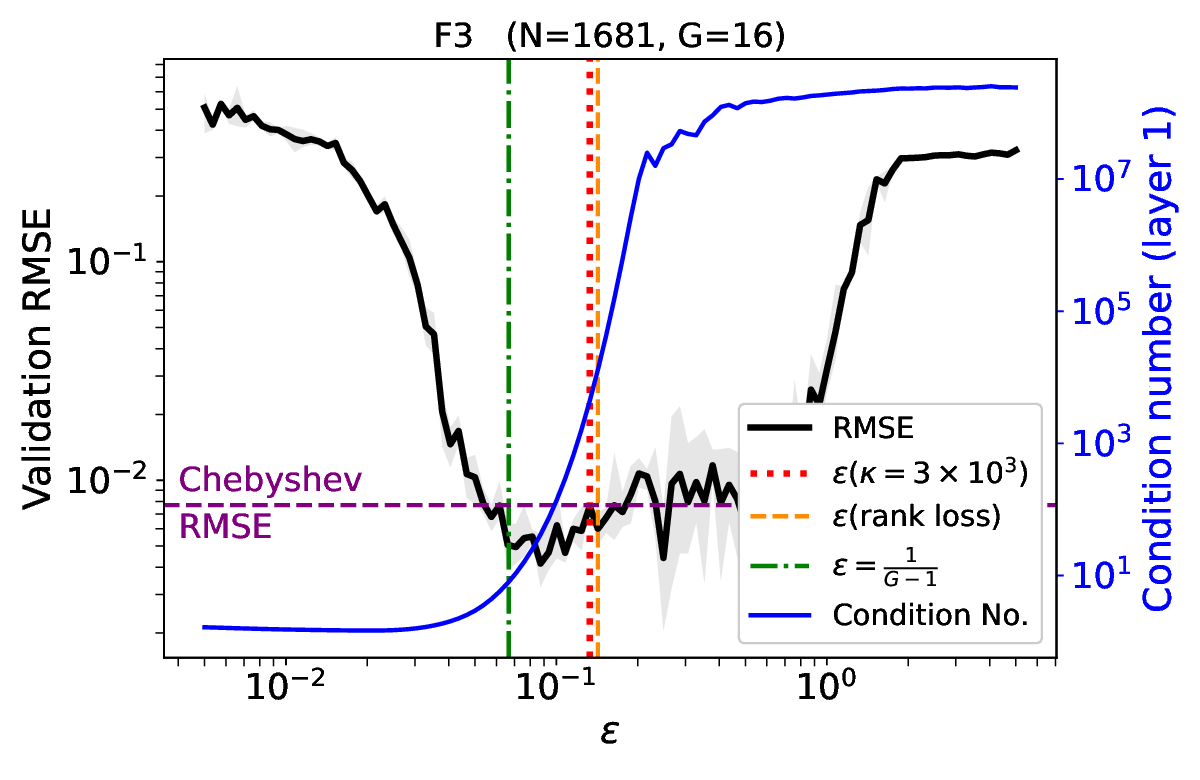}}
\subfigure[]{\label{eps_diag_F24_N1681_G20}
\includegraphics[width=3in]{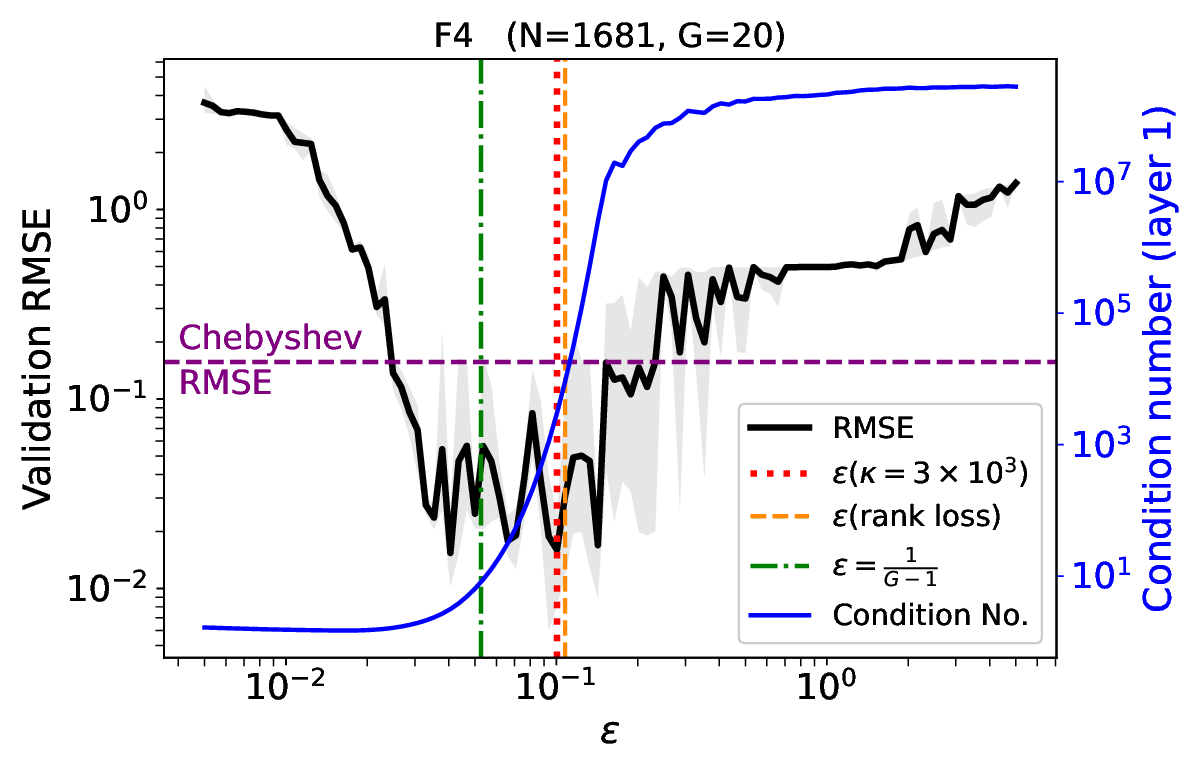}}
\caption{
Validation RMSE and first-layer conditioning as functions of \(\epsilon\). The vertical markers show the center-spacing scale \(\epsilon=1/(G-1)\), the conditioning threshold \(\epsilon_{\kappa}\), and the numerical rank-loss point \(\epsilon_{\mathrm{rank}}\). The latter two are nearly coincident, and the minimum validation RMSE occurs close to this transition. The horizontal purple line shows the validation RMSE of the matched Chebyshev KAN with degree \(m=G-1\); see Paragraph~\ref{par:cheby_reference}.
}
\label{fig:Cond_safe}
\end{figure}

We therefore define
\begin{equation}
\epsilon_{\mathrm{cond}}(N,G)
=
\arg\min_{\epsilon>0}
\left|
\log \kappa\!\bigl(\Phi(\epsilon)\bigr)
-
\log\!\left(3\times10^3\right)
\right|,
\label{eq:epsilon_cond_definition}
\end{equation}
which gives the initial interval
\begin{equation}
\epsilon
\in
\left[
\frac{1}{G-1},
\epsilon_{\mathrm{cond}}(N,G)
\right].
\label{eq:conditioning_interval}
\end{equation}

We next simplify the dependence of \(\epsilon_{\mathrm{cond}}\) on \(N\) and \(G\). For \(N\in[400,3600]\) and \(G\in[6,30]\), \(\epsilon_{\mathrm{cond}}(N,G)\) is computed by sweeping \(\epsilon\) and selecting the value closest to the conditioning threshold. Multiple low-discrepancy sample realizations are used for each pair \((N,G)\), and the resulting scales are combined by a geometric mean. The data are then divided into \(70\%\) training and \(30\%\) validation sets.

We fit the model
\begin{equation}
\log \epsilon_{\mathrm{cond}}
=
\log C
-
q\log(G-1),
\label{eq:log_linear_scaling_model}
\end{equation}
or equivalently,
\begin{equation}
\epsilon_{\mathrm{cond}}
\approx
C(G-1)^{-q}.
\label{eq:power_law_scaling_model}
\end{equation}
The fitted relation is
\begin{equation}
\epsilon_{\mathrm{cond}}
\approx
2.8127\,(G-1)^{-1.1297},
\label{eq:fitted_scaling_law}
\end{equation}
with validation \(R^2\approx0.997\). The fitted exponent is close to \(1\), while the dependence on \(N\) is weak once the samples cover the input interval sufficiently well. This is consistent with the fact that Gaussian overlap is governed mainly by the center spacing.

The fitted relation can therefore be replaced by the simpler rule
\begin{equation}
\epsilon_{\mathrm{cond}}
\approx
\frac{2}{G-1},
\label{eq:simplified_scaling_law}
\end{equation}
which still achieves \(R^2\approx0.985\) on the validation set. Figure~\ref{fig:eps_combined_plot} confirms that this expression closely follows both the measured conditioning scales and their geometric-mean aggregation over \(N\).

We thus obtain the practical interval
\begin{equation}
\boxed{
\epsilon
\in
\left[
\frac{1}{G-1},
\frac{2}{G-1}
\right].
}
\label{eq:practical_epsilon_interval}
\end{equation}
The lower bound ensures sufficient overlap between neighboring Gaussian functions, while the upper bound places the first-layer feature matrix near the observed onset of ill-conditioning and numerical rank loss.

\begin{figure}[hbt!]
\centering
\includegraphics[width=6in]{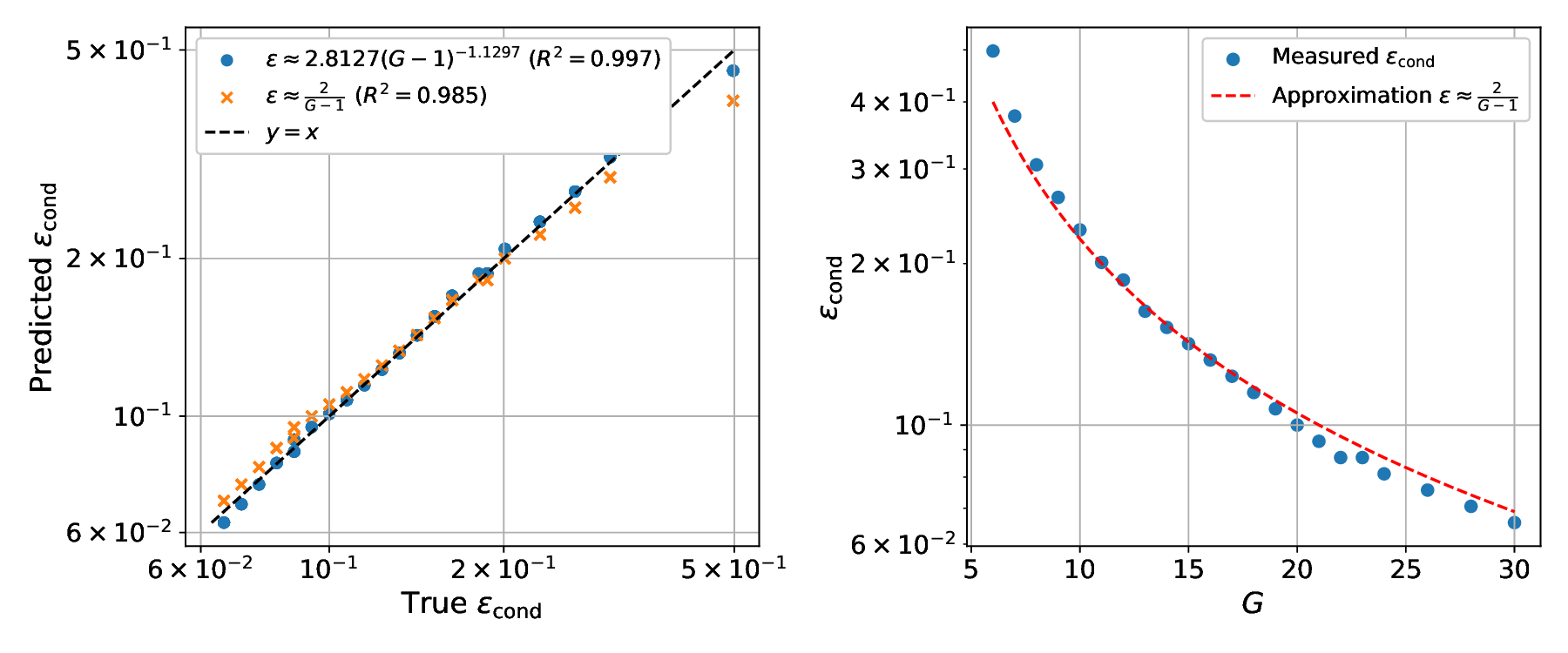}
\caption{
Simplification of the conditioning-based upper bound. \textbf{Left:} predicted versus measured values of \(\epsilon_{\mathrm{cond}}\) on held-out \((N,G)\) pairs. The fitted power law in \eqref{eq:fitted_scaling_law} gives \(R^2\approx0.997\), while the simplified rule in \eqref{eq:simplified_scaling_law} gives \(R^2\approx0.985\). The dashed line denotes exact agreement. \textbf{Right:} geometric-mean values of \(\epsilon_{\mathrm{cond}}\) aggregated over \(N\) for each \(G\), confirming that the dominant dependence is on the number of centers.
}
\label{fig:eps_combined_plot}
\end{figure}

Classical meshfree RBF theory also provides conditioning bounds involving the extreme singular values of the feature matrix~\cite{Fasshauer06}. Although theoretically useful, such bounds are less convenient for routine scale selection. The interval in \eqref{eq:practical_epsilon_interval} is therefore adopted as a simple and directly usable rule.

\paragraph{Chebyshev reference.}\label{par:cheby_reference}
Figure~\ref{fig:Cond_safe} includes a Chebyshev KAN as an accuracy reference. For a fair comparison, the polynomial degree is set to \(m=G-1\), so that both models use \(G\) trainable basis coefficients per edge. The Chebyshev model uses \(\tanh\) between layers to keep the inputs within the admissible basis range, and its final layer also uses a Chebyshev expansion. Under this matched coefficient budget, a properly scaled Gaussian KAN achieves validation errors comparable to, and in several cases lower than, the Chebyshev KAN.

The two bases differ in how accuracy and stability are controlled. In Gaussian KANs, both are adjusted directly through the centers and the scale parameter \(\epsilon\). In Chebyshev KANs, increasing the polynomial degree may improve approximation power but can reduce stability, while lower degrees may be more stable but less accurate; several remedies have therefore been proposed, including basis localization~\cite{Toscano24_kkan}, interlayer normalization~\cite{Yu24}, and architectural modifications~\cite{Daryakenari25}. The Gaussian formulation provides a more direct mechanism because the same scale parameter controls both conditioning and approximation behavior.

This comparison supports the main objective of the present work: showing that Gaussian KANs are competitive when \(\epsilon\) is selected properly. Additional results are reported in Appendix~\ref{app:gaussian_chebyshev_appendix}, where \(G=20\) is fixed and several Chebyshev degrees are tested. These results confirm that poor scale selection can make Gaussian KANs appear weak, whereas a suitable \(\epsilon\) can outperform several Chebyshev configurations. A broader optimized comparison would require a separate study with matched architectures, training procedures, and stabilization strategies.

\subsection{Effect of Collocation Density \texorpdfstring{$N$}{N}}

We now study how the number of collocation points affects approximation in the Gaussian KAN. The goal is to separate two effects: \(N\) controls how well the domain is resolved, while \(\epsilon\) controls the quality and stability of the Gaussian feature representation. We show that increasing \(N\) improves accuracy, but only when \(\epsilon\) is chosen in a suitable range.

All experiments in this subsection are performed on the unit square
\begin{equation}
\Omega=[0,1]^2,
\label{eq:domain_unit_square}
\end{equation}
using Halton points
\begin{equation}
X_N=\{x^1,\dots,x^N\}\subset \Omega
\label{eq:halton_point_set}
\end{equation}
as the training set. Since Halton points are quasi-uniform, their geometric resolution is described by the fill distance
\begin{equation}
h_{X_N,\Omega}
=
\sup_{x\in\Omega}\min_{x^n\in X_N}\|x-x^n\|_2,
\label{eq:fill_distance_definition}
\end{equation}
which in two dimensions satisfies
\begin{equation}
h_{X_N,\Omega}\sim N^{-1/2}.
\label{eq:fill_distance_scaling}
\end{equation}
Thus, increasing \(N\) improves the geometric resolution of the training set.

To generate a systematic refinement sequence, we choose
\begin{equation}
N\in\{k^2:\;k=5,7,9,\dots,87\}.
\label{eq:N_list_definition}
\end{equation}
This gives a monotone progression of collocation densities and makes the scaling in \eqref{eq:fill_distance_scaling} easy to interpret.

Throughout this subsection, the grid size is fixed at \(G=20\). Therefore, the practical interval from \eqref{eq:practical_epsilon_interval} becomes
\begin{equation}
\epsilon\in
\left[
\frac{1}{19},
\frac{2}{19}
\right]
\approx
[0.0526,\,0.1053].
\label{eq:G20_interval}
\end{equation}
We now examine how this interval behaves as \(N\) changes.

\paragraph{Error landscape in \texorpdfstring{$\epsilon$}{epsilon} for different \texorpdfstring{$N$}{N}.}
For fixed architecture and grid size, let
\begin{equation}
E(N,\epsilon)
\label{eq:error_function_N_eps}
\end{equation}
denote the validation RMSE obtained with \(N\) collocation points and Gaussian scale \(\epsilon\). Figure~\ref{fig:error_vs_eps_shadow} plots \(E(N,\epsilon)\) for two representative training sizes, \(N=961\) and \(N=3249\), with the vertical lines marking the interval in \eqref{eq:G20_interval}.

Across all four targets, the same basic structure appears: the error is large for very small \(\epsilon\), decreases into a low-error region, and rises again for large \(\epsilon\). Thus the dependence on \(\epsilon\) remains essentially U-shaped as \(N\) changes. Increasing \(N\) lowers the overall error level, but shifts the useful \(\epsilon\)-window only mildly. In particular, the error-minimizing region remains inside or close to \eqref{eq:G20_interval}. This shows that the conditioning-based interval remains informative even as the collocation density changes.

\paragraph{Best attainable error as a function of \texorpdfstring{$N$}{N}.}
To isolate the effect of collocation density, define
\begin{equation}
E_{\mathrm{opt}}(N)
=
\min_{\epsilon} E(N,\epsilon),
\label{eq:optimal_error_definition}
\end{equation}
that is, the best validation RMSE obtained over the \(\epsilon\)-sweep for a given \(N\). Figure~\ref{fig:optimal_error_vs_N} plots \(E_{\mathrm{opt}}(N)\) against \(N\). The best error decreases monotonically for all four targets, confirming that denser collocation improves approximation quality once the Gaussian scale is chosen well.

The observed decay also reflects target regularity. For the smoother targets F1 and F2, the visible pre-asymptotic behavior is approximately aligned with
\begin{equation}
E_{\mathrm{opt}}(N)\sim h^3 \sim N^{-3/2},
\label{eq:smooth_rate_scaling}
\end{equation}
whereas the less regular targets F3 and F4 are closer to
\begin{equation}
E_{\mathrm{opt}}(N)\sim h^2 \sim N^{-1}.
\label{eq:rough_rate_scaling}
\end{equation}
This is consistent with the standard expectation that smoother targets admit faster convergence, while reduced regularity lowers the observed rate; see Chapter 17 of Fasshauer~\cite{Fasshauer06}.

\paragraph{Interaction between \texorpdfstring{$N$}{N} and \texorpdfstring{$\epsilon$}{epsilon}.}
Figure~\ref{fig:fixed_eps_convergence} shows the validation RMSE versus \(N\) for the fixed Gaussian scales \(\epsilon\in\{0.06,0.08\}\). In all cases, the error decreases as \(N\) increases, but the decay still depends on the choice of \(\epsilon\). Thus, \(N\) and \(\epsilon\) play different roles: \(N\) increases geometric resolution, while \(\epsilon\) controls how effectively the Gaussian feature representation uses that resolution.

\begin{figure}[hbt!]
\centering
\includegraphics[width=7.5in]{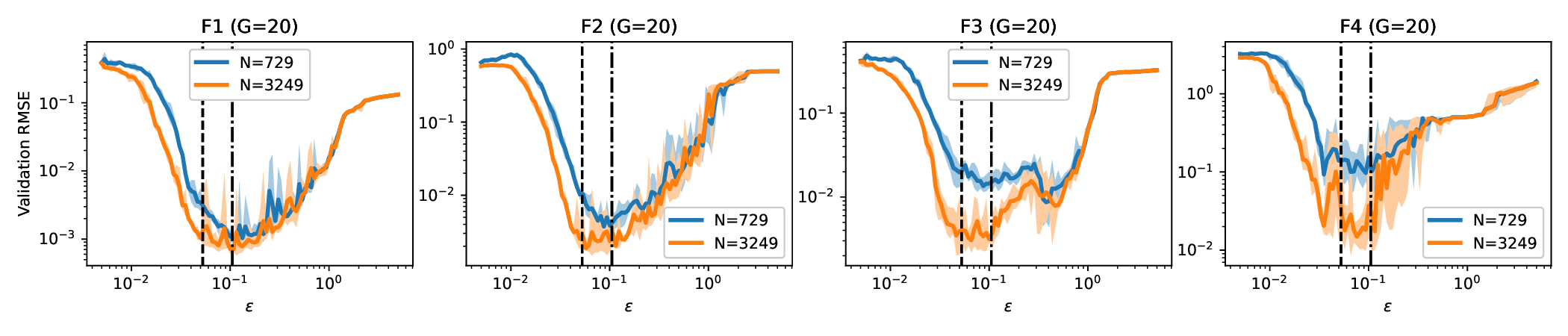}
\caption{
Validation RMSE as a function of the Gaussian scale \(\epsilon\) for two representative collocation densities, \(N=729\) and \(N=3249\), with \(G=20\). The vertical markers indicate \(\epsilon=1/(G-1)\) and \(\epsilon=2/(G-1)\). Across all four targets, increasing \(N\) lowers the error, while the low-error region remains concentrated near the conditioning-based interval.
}
\label{fig:error_vs_eps_shadow}
\end{figure}

\begin{figure}[hbt!]
\centering
\includegraphics[width=7.5in]{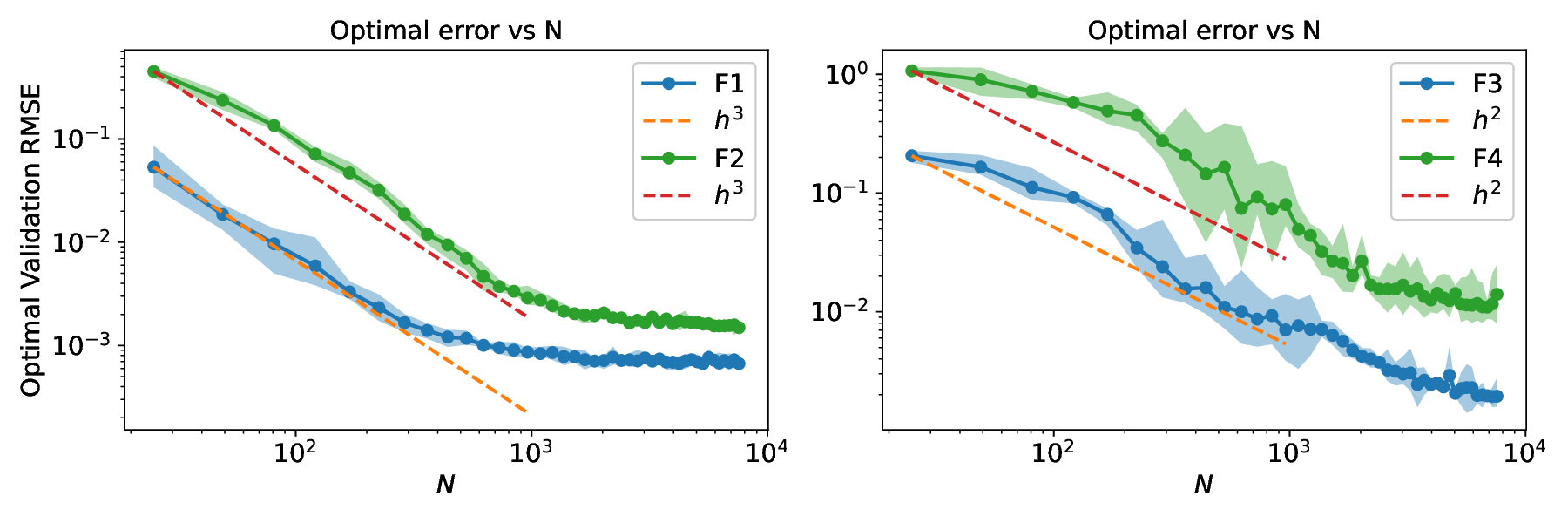}
\caption{
Best validation RMSE \(E_{\mathrm{opt}}(N)\) from \eqref{eq:optimal_error_definition} as a function of the number of collocation points. For the smoother targets F1 and F2, the visible decay is approximately aligned with the \(h^3\) reference, while F3 and F4 are closer to the \(h^2\) reference. In all cases, denser Halton sampling improves the best attainable accuracy.
}
\label{fig:optimal_error_vs_N}
\end{figure}

\begin{figure}[hbt!]
\centering
\includegraphics[width=7.5in]{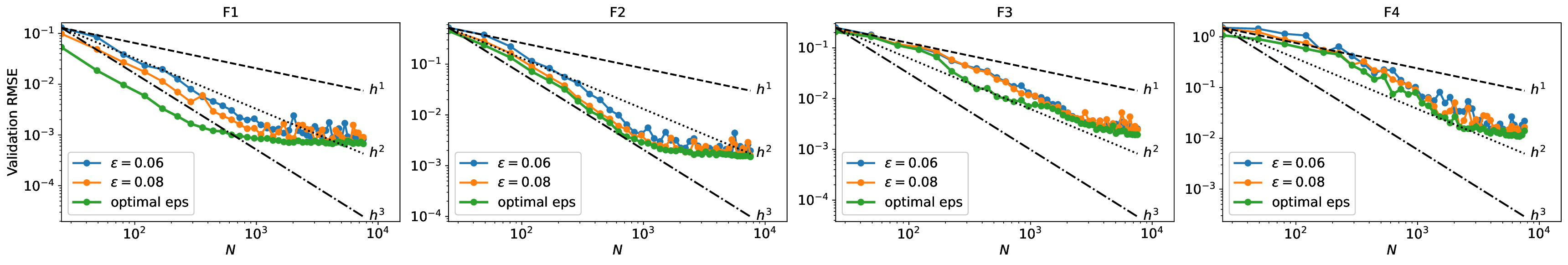}
\caption{
Validation RMSE versus \(N\) for the fixed Gaussian scales \(\epsilon=0.06,0.08\), together with the near-optimal choice at each \(N\) with $G=20$.
}
\label{fig:fixed_eps_convergence}
\end{figure}

\subsection{Effect of the Number of Gaussian Centers \texorpdfstring{$G$}{G}}

We next study the effect of the grid resolution, that is, the number of shared Gaussian centers in each one-dimensional feature map. For fixed architecture and training size \(N\), let
\begin{equation}
\mathcal{C}_G=\{c_1,\dots,c_G\}\subset[0,1]
\label{eq:grid_centers_G}
\end{equation}
denote the uniformly distributed centers. 
The conditioning-based rule in \eqref{eq:simplified_scaling_law} and \eqref{eq:practical_epsilon_interval} already shows that \(G\) and \(\epsilon\) cannot be chosen independently. As \(G\) increases, the center spacing \(\Delta_c\) decreases, so the useful scale range must also move to smaller \(\epsilon\).

\paragraph{Fixed \texorpdfstring{$\epsilon$}{epsilon} versus adaptive \texorpdfstring{$\epsilon$}{epsilon}.}
Figure~\ref{fig:fixedEPS_rmse_vs_G} shows \(E(G,\epsilon)\) for several fixed values of \(\epsilon\), together with a near-optimal choice selected separately for each \(G\). The fixed-\(\epsilon\) curves depend strongly on the chosen scale: a value that works well for one grid size can be clearly suboptimal for another. By contrast, the near-optimal choice gives a much more stable error curve across the full range of \(G\). Thus, refining the grid alone is not enough; \(\epsilon\) must be adjusted with the center spacing.

\paragraph{Shift of the low-error \texorpdfstring{$\epsilon$}{epsilon}-window.}
Figure~\ref{fig:error_vs_eps_G} compares the full \(\epsilon\)-sweeps for two representative grid sizes, \(G=10\) and \(G=32\). Both cases show the same U-shaped dependence on \(\epsilon\), but the effect of increasing \(G\) is different from the effect of increasing \(N\). Increasing \(N\) mainly pushes the curve downward, reducing the error level while leaving the useful \(\epsilon\)-window roughly in the same place. Increasing \(G\), instead, shifts the low-error region to the left, that is, toward smaller \(\epsilon\), and also makes the good-\(\epsilon\) range wider. This is exactly what the proposed range predicts, since that range is tied to the center spacing. 

At the same time, the minimum error itself changes much less than the location of the minimizer. This is consistent with the matrix viewpoint developed earlier: increasing \(G\) mainly changes the geometry of the basis and the position of the stable operating range, rather than acting as a direct accuracy parameter by itself.

\begin{figure}[hbt!]
\centering
\includegraphics[width=7.5in]{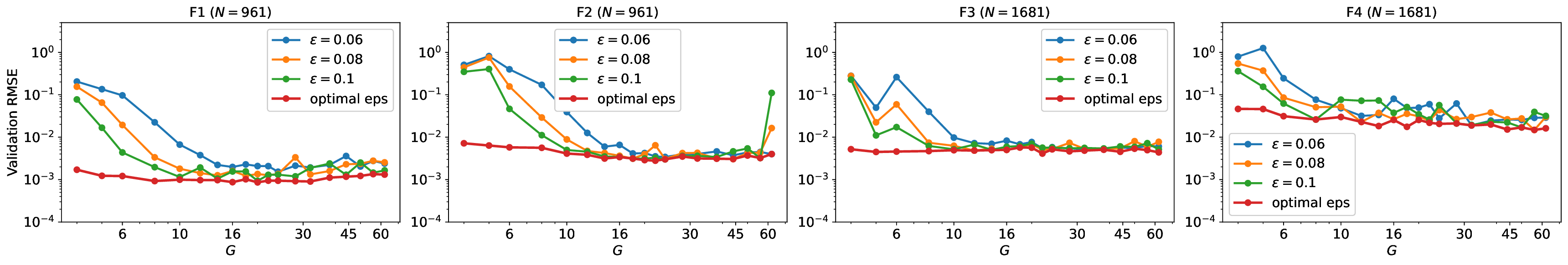}
\caption{
Validation RMSE versus grid resolution \(G\) for several fixed Gaussian scales and for a near-optimal scale chosen separately at each \(G\). The fixed-\(\epsilon\) curves vary strongly with \(G\), whereas the near-optimal choice yields a much more stable error profile.
}
\label{fig:fixedEPS_rmse_vs_G}
\end{figure}

\begin{figure}[hbt!]
\centering
\includegraphics[width=7.5in]{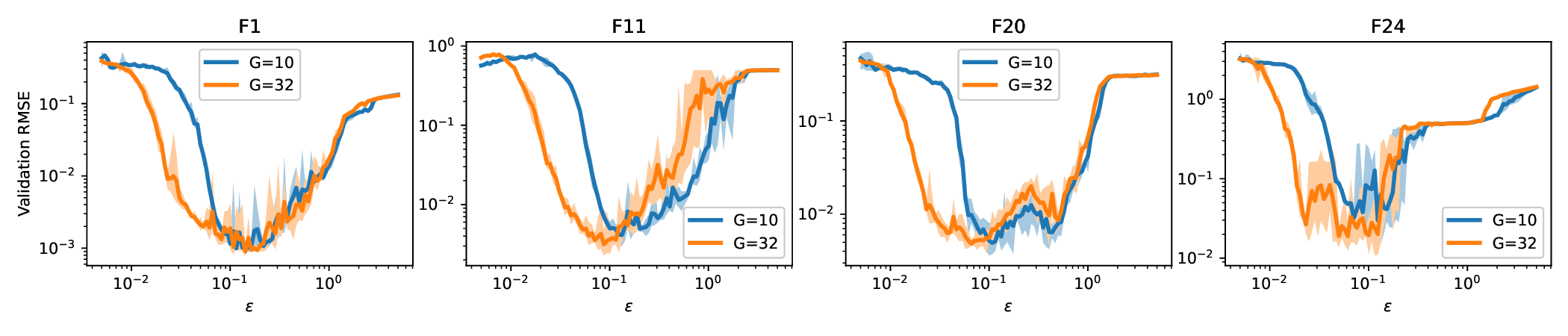}
\caption{
Validation RMSE as a function of the Gaussian scale \(\epsilon\) for two representative grid resolutions, \(G=10\) and \(G=32\). As \(G\) increases, the low-error region shifts to smaller \(\epsilon\) and becomes wider.
}
\label{fig:error_vs_eps_G}
\end{figure}

\subsection{Effect of Network Architecture}

We next study how the network architecture affects the relation between approximation error and the Gaussian scale. Throughout this subsection, the grid resolution is fixed at \(G=20\), so the conditioning-based reference interval is \(\epsilon\in[1/19,\,2/19]\approx[0.0526,\,0.1053]\). Figure~\ref{fig:error_vs_eps_arch} shows that all tested architectures retain the same U-shaped dependence on \(\epsilon\). What changes is the minimum error and the width of the low-error region. Since the first layer is the only layer defined directly on the input domain, it still determines the relevant scale regime. Increasing width or depth does not move the left side of the U-shaped curve, which is the part controlled by the first-layer geometry. Instead, larger architectures mainly enlarge the range of good \(\epsilon\) from the right. This is different from increasing \(G\), which shifts the useful range to the left. The effect is more visible for the less smooth targets F3 and F4. In the top row, the wider architecture \((2,20,20,1)\) generally attains lower error than \((2,4,4,1)\) and makes the low-error region broader. In the bottom row, the deeper architecture \((2,8,8,8,1)\) also lowers the error and broadens the admissible range compared with \((2,8,1)\), again mainly on the larger-\(\epsilon\) side. Thus, a scale range that is suitable for a shallower Gaussian representation remains admissible for a deeper Gaussian KAN; deeper layers may enlarge the useful range, but they do not shrink the range already set by the first layer.

\begin{figure}[hbt!]
\centering
\includegraphics[width=7.5in]{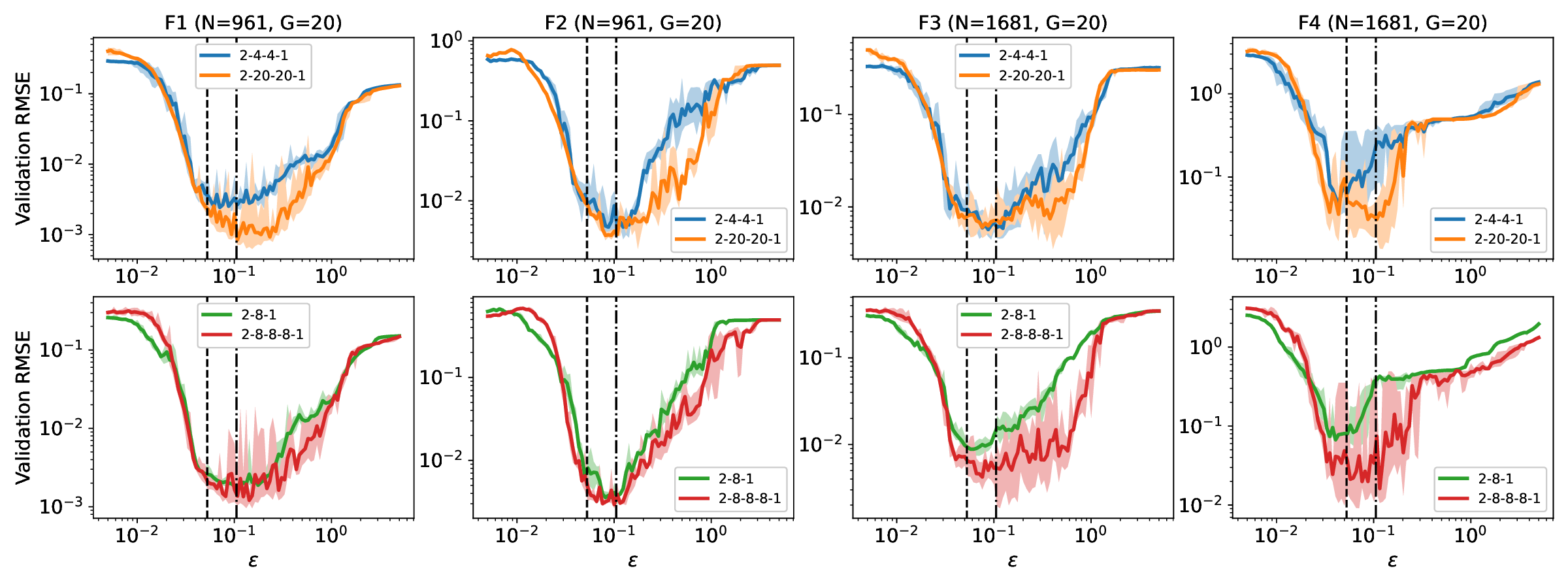}
\caption{
Validation RMSE versus the Gaussian scale \(\epsilon\) for different Gaussian KAN architectures at fixed grid size \(G=20\). The two vertical markers indicate \(\epsilon=1/(G-1)\) and \(\epsilon=2/(G-1)\). 
}
\label{fig:error_vs_eps_arch}
\end{figure}

\subsection{Effect of Input Dimension}

We finally investigate how the Gaussian scale behaves as the input dimension increases. For each dimension \(d\), we consider the smooth target
\begin{equation}
f_d(x)
=
\exp\!\left(
\frac{1}{d}
\sum_{i=1}^{d}
\left[
\sin(\pi x_i)+\frac{1}{2}x_i^2
\right]
\right),
\qquad
x\in[0,1]^d,
\label{eq:dimension_test_function}
\end{equation}
which remains bounded and of comparable magnitude across dimensions. The experiments in Figure~\ref{fig:HD_rmse_vs_eps} use the settings listed in Table~\ref{tab:dimension_experimental_settings}.

\begin{table}[hbt!]
\centering
\caption{Experimental settings used in the input-dimension study. The last column lists the hidden-layer widths, so for example $[12]$ denotes one hidden layer with 12 neurons, while $[12,12,12]$ denotes three hidden layers with 12 neurons each.}
\label{tab:dimension_experimental_settings}
\begin{tabular}{l l l}
\hline
Dimension $d$ & Collocation points $N$ & Hidden-layer widths \\
\hline
1 & 30    & $[12]$ \\
2 & 1000  & $[12,12]$ \\
3 & 30000 & $[12,12,12]$ \\
4 & 50000 & $[12,12,12,12]$ \\
\hline
\end{tabular}
\end{table}

Figure~\ref{fig:HD_rmse_vs_eps} shows that the validation RMSE keeps the same U-shaped dependence on \(\epsilon\) from 1D to 4D. Very small \(\epsilon\) leads to overly localized features, while very large \(\epsilon\) leads to the flat and numerically unfavorable regime. Most importantly, the low-error region remains close to the same two reference lines
so increasing the ambient dimension does not introduce a new preferred scale.

This is consistent with the first-layer matrix construction in \eqref{eq:full_first_layer_matrix}. The feature matrix is built by concatenating coordinate-wise Gaussian blocks,
\begin{equation}
\Phi=
\begin{bmatrix}
\Phi^{(1)} & \cdots & \Phi^{(d)}
\end{bmatrix},
\label{eq:dimension_feature_block_structure}
\end{equation}
so increasing \(d\) adds more one-dimensional blocks but does not change the spacing mechanism inside each block. The same shared scale \(\epsilon\) still controls the overlap between neighboring centers, and therefore the same conditioning-based interval remains relevant.

The minimum error does increase with dimension, as expected, because the approximation problem becomes harder. However, the shape of the error curve and the location of the useful \(\epsilon\)-window remain essentially unchanged. Thus, in the present separable Gaussian KAN, the selection of \(\epsilon\) is still controlled primarily by the first-layer Gaussian geometry rather than by the ambient dimension itself.

\begin{figure}[hbt!]
\centering
\includegraphics[width=7.5in]{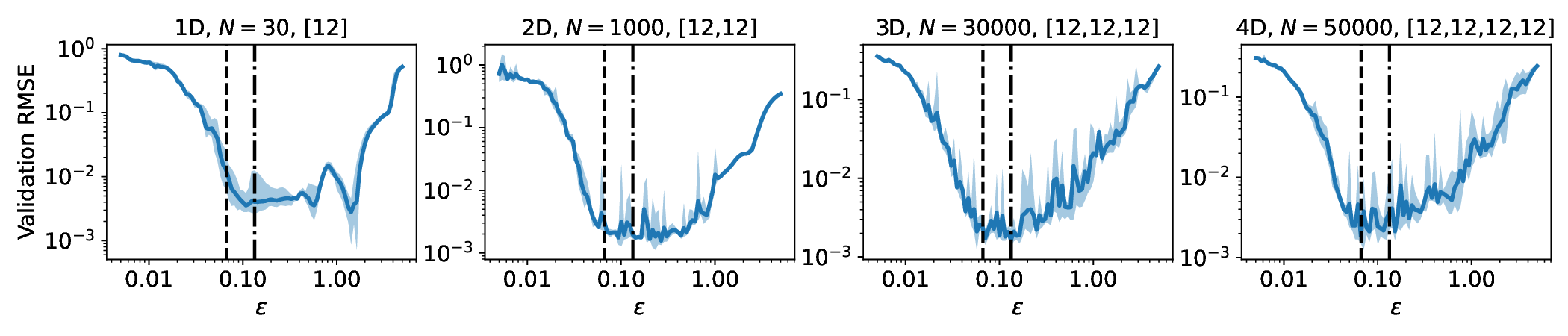}
\caption{
Validation RMSE versus the Gaussian scale \(\epsilon\) for input dimensions \(d=1,2,3,4\). 
The settings are those listed in Table~\ref{tab:dimension_experimental_settings}. 
The two vertical markers indicate the reference scales \(\epsilon=1/(G-1)\) and \(\epsilon=2/(G-1)\). 
Across dimensions, the same U-shaped dependence on \(\epsilon\) is observed, and the low-error region remains close to the same conditioning-based interval.
}
\label{fig:HD_rmse_vs_eps}
\end{figure}

\subsection{Gaussian KAN with Shared Centerwise Variable Scales}

The fixed-scale Gaussian feature map in \eqref{eq:gaussian_feature_map} can be generalized by assigning a different scale to each center while still sharing the same scale vector across all neurons and all layers. Let
\begin{equation}
\mathcal{E}=\{\epsilon_1,\dots,\epsilon_G\},
\qquad
\epsilon_g>0,
\quad g=1,\dots,G,
\label{eq:shared_variable_epsilon_vector}
\end{equation}
be the vector of centerwise Gaussian scales associated with the shared centers \(\mathcal{C}=\{c_1,\dots,c_G\}\subset[0,1]\). Then the Gaussian feature map becomes
\begin{equation}
\varphi_{\mathcal{E}}(t)
=
\begin{bmatrix}
\exp\!\left(-\dfrac{(t-c_1)^2}{\epsilon_1^2}\right)\\
\vdots\\
\exp\!\left(-\dfrac{(t-c_G)^2}{\epsilon_G^2}\right)
\end{bmatrix}
\in\mathbb{R}^G,
\qquad t\in\mathbb{R}.
\label{eq:variable_gaussian_feature_map}
\end{equation}

Thus, each center \(c_g\) has its own width \(\epsilon_g\), while the same vector \(\mathcal{E}\) is reused throughout the whole network. In the experiments below, the variable scales are sampled uniformly on the linear scale from the interval
\begin{equation}
\epsilon_g\in[\epsilon_{\min},\epsilon_{\max}],
\qquad
\epsilon_{\max}=\frac{2}{G-1},
\label{eq:variable_epsilon_safe_interval}
\end{equation}
and then assigned randomly to the shared centers \(\{c_g\}_{g=1}^G\). Hence, this model keeps the same Gaussian KAN architecture, but introduces controlled nonuniformity in the basis widths across centers. Although the centers themselves remain uniformly distributed, allowing the widths to vary makes the \emph{effective} coverage of the basis nonuniform across the domain. In this sense, the model gains some of the flexibility usually associated with nonuniform or data-adapted center placement, while still retaining the simplicity and interpretability of a fixed uniform center grid. Figure~\ref{compare_val_rmse_vs_epoch} compares this variable-scale model with the fixed-scale choice \(\epsilon=1/(G-1)\).

As Figure~\ref{compare_val_rmse_vs_epoch} shows, the shared centerwise variable-scale model can be more accurate than the single-\(\epsilon\) version. In the present experiment, the scales are assigned randomly within the admissible range, so this result should be interpreted only as an initial demonstration rather than an optimized construction. The use of variable shape parameters has a long history in classical RBF methods, where nonuniform scales often improve accuracy over constant-scale formulations~\cite{Ling06,Ling20,Fasshauer07}, and the present results indicate that this advantage can also carry over to the neural-network setting. More generally, one may vary not only the interval but also the sampling law for the scales, for example by using normal, log-normal, log-uniform, gamma, or chi-square distributions~\cite{Ling06}. Thus, even with uniformly distributed centers, variable-scale Gaussian KANs can induce a more adaptive and spatially nonuniform representation, making them a flexible and promising extension of the fixed-scale model.

\begin{figure}[hbt!]
\centering
\includegraphics[width=7in]{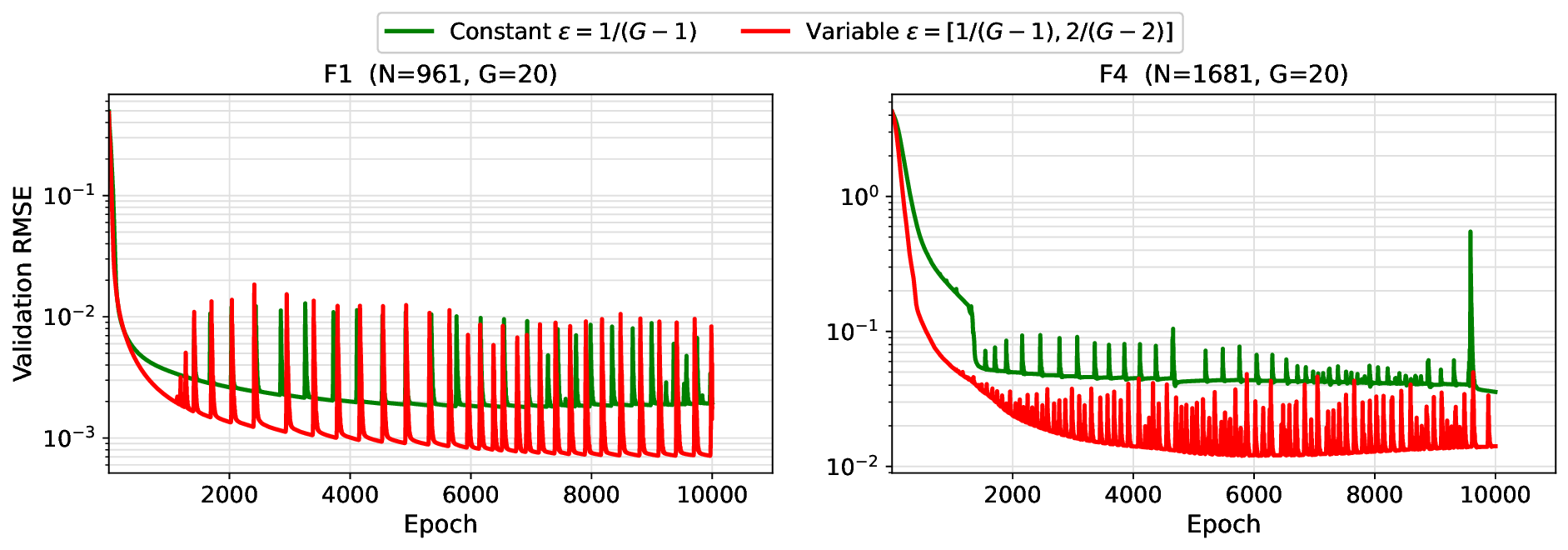}
\caption{Validation RMSE versus epoch for the fixed-scale Gaussian KAN and the Gaussian KAN with shared variable scales. Left: \(F1\) with \(N=961\). Right: \(F4\) with \(N=1681\). The fixed model uses the common scale \(\epsilon=1/(G-1)\), while the variable model assigns one scale \(\epsilon_g\) to each center \(c_g\), with the same scale vector shared across all neurons and all layers.}
\label{compare_val_rmse_vs_epoch}
\end{figure}

\subsection{Training-MSE-Based Scale Search Within the Admissible Interval}

The first-layer diagnostics do not select a single value of \(\epsilon\); they reduce the search to a numerically admissible interval. 
This is the main practical role of the range: instead of sweeping over all candidate scales, one only needs to search inside the conditioning-controlled region. 
Figure~\ref{fig:eps_train_val_proxy} shows that, within this reduced interval, the training MSE provides a useful criterion for selecting a working scale.

For the four representative problems \(F1\), \(F2\), \(F3\), and \(F4\), the low-training-error region at \(500\) epochs is located in essentially the same part of the \(\epsilon\)-axis as the low-validation-RMSE region at \(10000\) epochs. 
The agreement is clearest for the smooth and oscillatory cases \(F1\) and \(F2\), and remains informative for the more difficult cases \(F3\) and \(F4\), where the curves are rougher but still identify the same favorable range.

The practical implications are straightforward:
\begin{itemize}
\item The admissible interval sharply reduces the \(\epsilon\)-search space.
\item In these regression experiments, early training MSE provides a useful proxy for localizing a good scale within the admissible interval.
\item Hence one can perform a short sweep over \(\epsilon\), choose a candidate from the training-MSE curve, and run full training only for that value.
\item This is especially useful when the exact solution is unavailable, so validation against the true target cannot be used for scale selection.
\end{itemize}

Therefore, the proposed interval is useful not only for stability control, but also for reducing the cost of scale selection. 
It turns the search for \(\epsilon\) into a smaller optimization problem over a numerically justified range, and the early training MSE is usually sufficient to identify the relevant part of that range.

\begin{figure}[hbt!]
\centering
\includegraphics[width=7in]{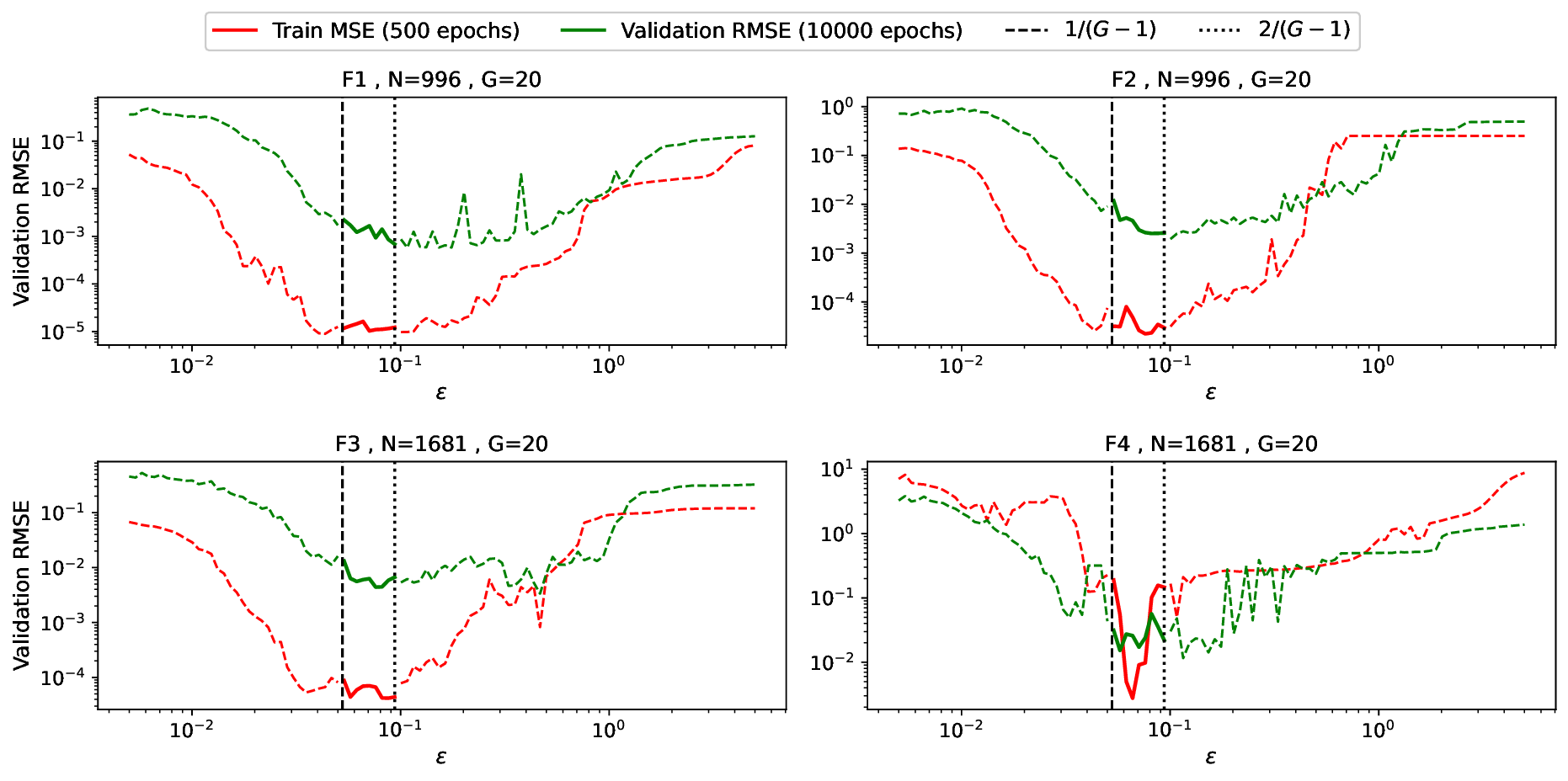}
\caption{
Training-MSE-based scale selection inside the admissible interval. 
Each panel shows the training MSE at \(500\) epochs and the validation RMSE at \(10000\) epochs as functions of \(\epsilon\). 
The vertical markers indicate the lower reference scale and the conditioning-based upper boundary of the search interval. 
Across all four examples, the early training-MSE curve already localizes essentially the same favorable \(\epsilon\)-region as the final validation RMSE, so the admissible interval can be searched efficiently using only a short preliminary sweep.
}
\label{fig:eps_train_val_proxy}
\end{figure}

\subsection{Helmholtz Problem: Physics-Informed Validation}

To test whether the conditioning-based scale interval also remains useful in a physics-informed setting, we consider the two-dimensional Helmholtz problem
\begin{equation}
-\Delta u - \lambda u = f
\qquad \text{in } \Omega=[0,1]^2,
\label{eq:helmholtz_pde}
\end{equation}
with homogeneous Dirichlet boundary condition
\begin{equation}
u=0
\qquad \text{on } \partial\Omega.
\label{eq:helmholtz_bc}
\end{equation}
We choose the exact solution
\begin{equation}
u(x,y)
=
\sin(a_1\pi x)\sin(a_2\pi y),
\qquad
(x,y)\in[0,1]^2,
\label{eq:helmholtz_exact_solution}
\end{equation}
which vanishes on the boundary and therefore satisfies \eqref{eq:helmholtz_bc}. 
Substituting \eqref{eq:helmholtz_exact_solution} into \eqref{eq:helmholtz_pde} gives the forcing term
\begin{equation}
f(x,y)
=
\Bigl((a_1^2+a_2^2)\pi^2-\lambda\Bigr)
\sin(a_1\pi x)\sin(a_2\pi y).
\label{eq:helmholtz_forcing}
\end{equation}

In the experiments below, the PDE residual is weighted by
\begin{equation}
w_{\mathrm{pde}}=1,
\label{eq:helmholtz_pde_weight}
\end{equation}
and the boundary loss is weighted by
\begin{equation}
w_{\mathrm{bc}}=100.
\label{eq:helmholtz_bc_weight}
\end{equation}
The collocation size and Gaussian grid are fixed at
\begin{equation}
(N_{BC},N_{PDE})=(800,2000),
\qquad
G=20,
\label{eq:helmholtz_NG}
\end{equation}
so the conditioning-based reference interval from \eqref{eq:practical_epsilon_interval} becomes
\begin{equation}
\epsilon\in
\left[
\frac{1}{19},
\frac{2}{19}
\right]
\approx
[0.0526,\,0.1053].
\label{eq:helmholtz_reference_interval}
\end{equation}
We study three values of the Helmholtz parameter,
\begin{equation}
\lambda\in\{0,10,100\},
\label{eq:helmholtz_lambda_values}
\end{equation}
for two exact solutions: the lower-frequency case
\begin{equation}
(a_1,a_2)=(1,2),
\label{eq:helmholtz_case_11}
\end{equation}
and the anisotropic higher-frequency case
\begin{equation}
(a_1,a_2)=(1,4).
\label{eq:helmholtz_case_14}
\end{equation}

Figure~\ref{fig:helmholtz_eps} shows the validation RMSE as a function of the Gaussian scale \(\epsilon\), with the two vertical markers corresponding to the endpoints of \eqref{eq:helmholtz_reference_interval}. 
In both test cases, the same qualitative behavior seen in the regression experiments is recovered: the error is large for very small \(\epsilon\), decreases sharply into a low-error regime, and then increases again as \(\epsilon\) becomes too large. 
Thus, the Helmholtz problem exhibits the same basic trade-off between insufficient overlap at small scales and degradation of the Gaussian representation at large scales.

Most importantly, the best-performing region remains within or close to the interval \eqref{eq:helmholtz_reference_interval} for all three values of \(\lambda\). 
For \(\lambda=0\) and \(\lambda=10\), the minimum error is attained clearly inside this interval in both problem settings. 
For the more challenging case \(\lambda=100\), the error level is higher and the low-error region becomes broader and less stable, but it is still centered near the same conditioning-based range. 
This indicates that the first-layer scale rule derived earlier remains informative even after the loss is replaced by a PDE residual and boundary penalty.

The two exact solutions also illustrate a useful distinction. 
The case \((a_1,a_2)=(1,2)\) is smoother and easier, so the minimum RMSE is lower and the low-error region is broader. 
The case \((a_1,a_2)=(1,4)\) is more oscillatory in the \(y\)-direction, and therefore more demanding, which raises the minimum error and makes the curves more sensitive, especially for larger \(\lambda\). 
Nevertheless, across both exact solutions and all tested values of \(\lambda\), the preferred \(\epsilon\)-window remains within or close to the same conditioning-based interval and stays tied to the same geometric scale set by the shared Gaussian centers. This shows that larger \(\lambda\) and more oscillatory solutions may increase the error and make the curves less stable, but they do not substantially change the preferred \(\epsilon\)-window.

This example is intentionally limited to a single PDE family with a tunable parameter \(\lambda\). 
Its role is not to provide an exhaustive PDE study, but to show that the interval suggested by the first-layer conditioning analysis continues to work in a representative physics-informed problem. 
A broader investigation across other PDEs, boundary conditions, and training regimes can be carried out in future work.

\begin{figure}[hbt!]
\centering
\subfigure[]{\label{fig:helmholtz_eps_12}
\includegraphics[width=3in]{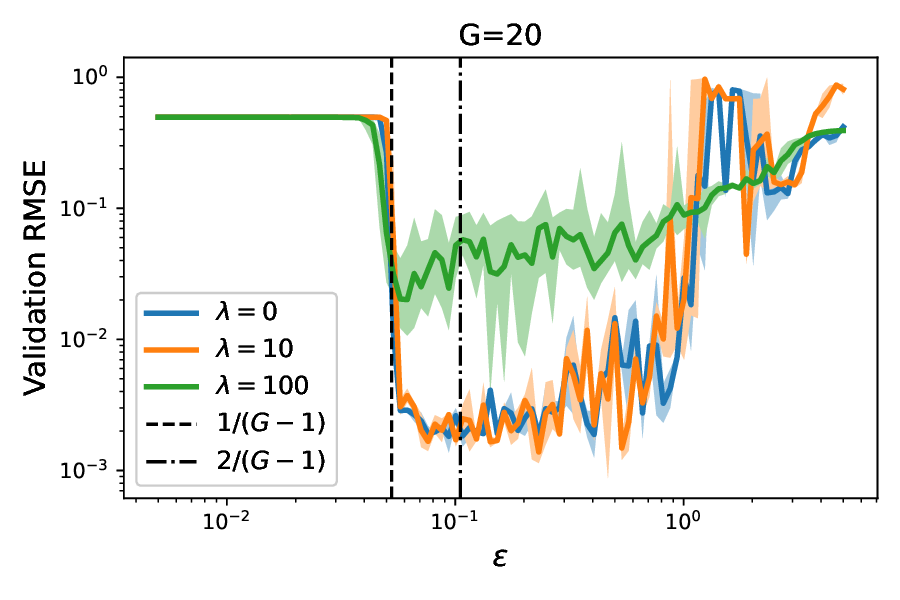}}
\subfigure[]{\label{fig:helmholtz_eps_14}
\includegraphics[width=3in]{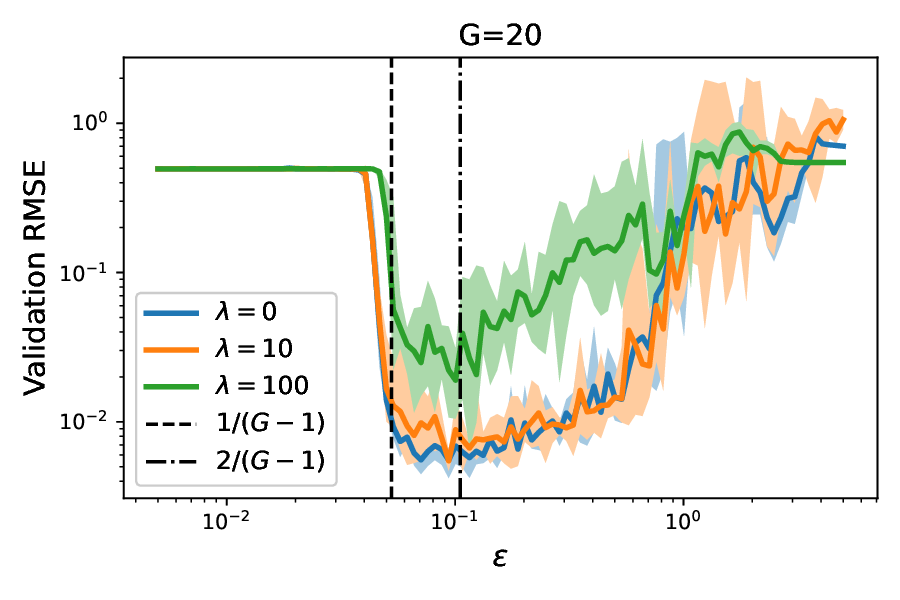}}
\caption{
Validation RMSE versus the Gaussian scale \(\epsilon\) for the Helmholtz problem \eqref{eq:helmholtz_pde}--\eqref{eq:helmholtz_bc} with \((N_{BC},N_{PDE})=(800,2000)\), \(G=20\), \(w_{\mathrm{pde}}=1\), and \(w_{\mathrm{bc}}=100\). 
The two vertical markers indicate \(\epsilon=1/(G-1)\) and \(\epsilon=2/(G-1)\), i.e. the endpoints of the conditioning-based interval \eqref{eq:helmholtz_reference_interval}. 
\textbf{(a)} Exact solution \eqref{eq:helmholtz_exact_solution} with \((a_1,a_2)=(1,2)\). 
\textbf{(b)} Exact solution \eqref{eq:helmholtz_exact_solution} with \((a_1,a_2)=(1,4)\). 
In both cases, and for all tested values \(\lambda\in\{0,10,100\}\), the best-performing region lies within or close to the same conditioning-based range.
}
\label{fig:helmholtz_eps}
\end{figure}

\subsection{Black--Scholes Digital Option Problem}

As a second physics-informed test, we consider the Black--Scholes equation for a digital cash-or-nothing call option,
\begin{equation}
u_t+\frac{1}{2}\sigma^2 S^2 u_{SS}+rS u_S-r u=0,
\qquad (S,t)\in[0,S_{\max}]\times[0,T],
\label{eq:bs_pde}
\end{equation}
where \(S\) denotes the asset price and \(t\) denotes time. We use
\begin{equation}
S_{\max}=1,\qquad T=1,\qquad r=0.05,\qquad \sigma=0.2,\qquad K=0.5.
\label{eq:bs_parameters}
\end{equation}
The terminal condition is the discontinuous digital payoff
\begin{equation}
u(S,T)=
\begin{cases}
1, & S>K,\\
0, & S\le K,
\end{cases}
\label{eq:bs_terminal_condition}
\end{equation}
together with the boundary conditions
\begin{equation}
u(0,t)=0,
\qquad
u(S_{\max},t)=u_{\mathrm{exact}}(S_{\max},t).
\label{eq:bs_boundary_conditions}
\end{equation}
For \(t<T\), the exact solution is given by
\begin{equation}
u_{\mathrm{exact}}(S,t)
=
\exp\bigl(-r(T-t)\bigr)\Phi(d_2),
\label{eq:bs_exact_solution}
\end{equation}
where \(\Phi(\cdot)\) denotes the standard normal cumulative distribution function and
\begin{equation}
d_2=
\frac{
\log(S/K)+\left(r-\frac{1}{2}\sigma^2\right)(T-t)
}{
\sigma\sqrt{T-t}
}.
\label{eq:bs_d2}
\end{equation}
The expression \eqref{eq:bs_d2} is singular at \(t=T\), and the terminal value is therefore understood through the payoff condition \eqref{eq:bs_terminal_condition}. This problem is more challenging than the smooth Helmholtz example because the terminal data have a jump discontinuity at \(S=K\). Therefore, it provides a useful test of whether the proposed scale interval remains informative for a PDE problem with a sharp payoff layer.

In the experiment, we use \(N_{\mathrm{PDE}}=3000\) interior collocation points and \(N_{\mathrm{BC}}=500\) boundary/terminal points, with boundary weight \(w_{\mathrm{bc}}=1\). We compare three Gaussian grids, \(G=16\), \(G=20\), and \(G=24\). 
Figure~\ref{fig:bs_eps} shows the validation RMSE as a function of the Gaussian scale \(\epsilon\). For all  \(G=16\), \(G=20\), and \(G=24\), the error is large for very small \(\epsilon\), decreases sharply once the basis functions begin to overlap sufficiently, and then increases again for overly large \(\epsilon\). Despite the nonsmooth terminal payoff, the lowest-error region remains close to the interval predicted by the first-layer scale rule. Although larger values of \(\epsilon\) can still achieve comparable accuracy in some cases, the overall trend shows increasing error as \(\epsilon\) becomes too large.

\begin{figure}[hbt!]
\centering
\includegraphics[width=7in]{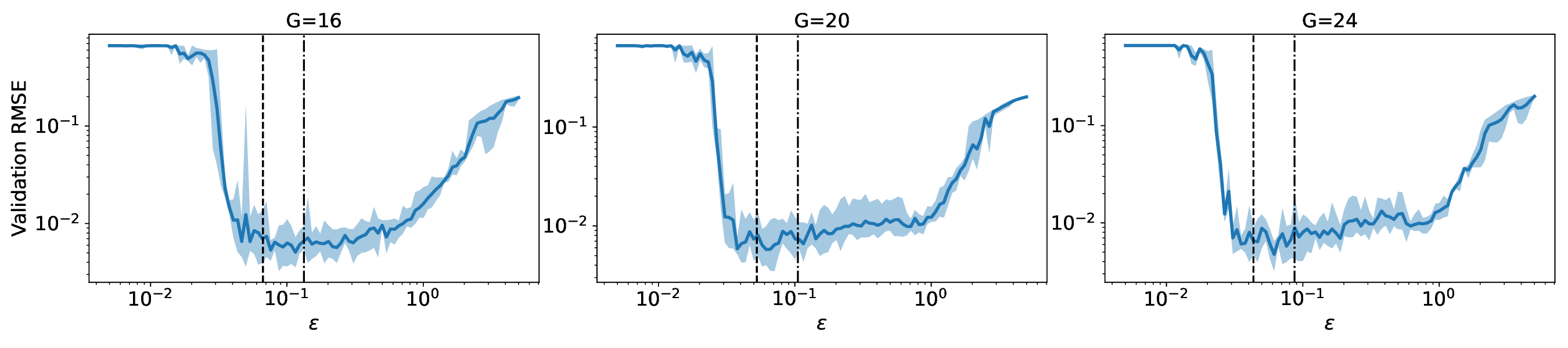}
\caption{
Validation RMSE versus the Gaussian scale \(\epsilon\) for the Black--Scholes digital option problem \eqref{eq:bs_pde}--\eqref{eq:bs_boundary_conditions}.
The three panels correspond to \(G=16\), \(G=20\), and \(G=24\). 
The vertical markers indicate \(\epsilon=1/(G-1)\) and \(\epsilon=2/(G-1)\). 
In all three cases, after \(\epsilon=2/(G-1)\), the error begins to increase as \(\epsilon\) becomes larger, and much larger values of \(\epsilon\) lead to substantially worse accuracy.
}
\label{fig:bs_eps}
\end{figure}

\subsection{Practical Scale Interval for Mat\'ern Bases}
\label{subsec:matern_scale_interval}

We next check whether the scale-selection viewpoint developed for Gaussian KANs extends to another RBF family. 
For this purpose, we consider a Mat\'ern KAN in which each edge function is expanded using a finite Mat\'ern basis with \(G\) centers \(c_g\in[0,1]\), \(g=1,\ldots,G\). 
For a scalar input \(t\in[0,1]\), define
\begin{equation}
d_g(t)=|t-c_g|,
\qquad
r_g(t)=\frac{\sqrt{2\nu}\,d_g(t)}{\epsilon},
\label{eq:matern_scaled_distance}
\end{equation}
where \(\nu\) is the Mat\'ern smoothness parameter and \(\epsilon>0\) is the scale parameter. 
The Mat\'ern feature map is written as
\begin{equation}
\varphi_{\mathrm M,\nu}(t)
=
\begin{bmatrix}
P_\nu(r_1(t))e^{-r_1(t)}\\
\vdots\\
P_\nu(r_G(t))e^{-r_G(t)}
\end{bmatrix},
\label{eq:matern_feature_map}
\end{equation}
where, for the closed-form cases used here,
\begin{equation}
P_1(r)=1,\qquad
P_3(r)=1+r,\qquad
P_5(r)=1+r+\frac{r^2}{3}.
\label{eq:matern_polynomial_cases}
\end{equation}
Thus, the Mat\'ern KAN layer has the same finite-feature form as the Gaussian KAN layer, but with the Gaussian feature map replaced by \(\varphi_{\mathrm M,\nu}\).

In the experiments below, we use the Mat\'ern-5 basis, corresponding to \(\nu=5\),
\begin{equation}
\phi_{\mathrm{M5}}(d;\epsilon)
=
\left(1+r+\frac{r^2}{3}\right)e^{-r},
\qquad
r=\frac{\sqrt{10}\,d}{\epsilon},
\label{eq:matern5_basis_profile_main}
\end{equation}
where \(d\ge 0\) denotes the distance from a center. 
Since the centers are uniformly distributed on \([0,1]\), the adjacent-center spacing is
\begin{equation}
h=\frac{1}{G-1}.
\label{eq:matern_spacing_h}
\end{equation}
Using the same adjacent-overlap level \(e^{-1}\) as in the Gaussian case gives
\begin{equation}
\epsilon_{\mathrm{low}}^{\mathrm{M5}}
\approx
1.0887\,h
=
\frac{1.0887}{G-1}.
\label{eq:matern5_lower_main}
\end{equation}
The derivation of this constant is given in Appendix~\ref{app:matern_lower_bound}.

The upper endpoint is again determined by first-layer conditioning. 
Let
\begin{equation}
\Phi^{\mathrm M}_{\epsilon,5}\in\mathbb{R}^{N\times dG}
\label{eq:matern_first_layer_main}
\end{equation}
denote the first-layer Mat\'ern-5 feature matrix generated by \eqref{eq:matern5_basis_profile_main} before multiplication by the learned coefficients, where \(N\) is the number of input samples and \(d\) is the input dimension. 
We define
\begin{equation}
\epsilon_{\kappa}^{\mathrm M}
=
\arg\min_{\epsilon>0}
\left|
\log \kappa\!\left(\Phi^{\mathrm M}_{\epsilon,5}\right)
-
\log\!\left(3\times 10^3\right)
\right|,
\label{eq:matern_eps_kappa_main}
\end{equation}
where \(\kappa(\cdot)\) denotes the matrix condition number. 
The empirical safe region for the Mat\'ern-5 KAN is therefore
\begin{equation}
\boxed{
\frac{1.0887}{G-1}
\;\lesssim\;
\epsilon
\;\lesssim\;
\epsilon_{\kappa}^{\mathrm M}
}
\label{eq:matern5_safe_interval_main}
\end{equation}
where the lower endpoint is determined by adjacent-center overlap and the upper endpoint is determined by the first-layer conditioning threshold.

Figure~\ref{fig:Matern_Cond_safe} verifies this interval empirically. 
Across the four test cases, the useful accuracy region lies between the lower overlap scale and the conditioning-based marker \(\epsilon_{\kappa}^{\mathrm M}\). 
As in the Gaussian case in Figure~\ref{fig:Cond_safe}, the best RMSE occurs before the first-layer feature matrix becomes strongly ill-conditioned, and the rank-loss marker is close to the conditioning marker. 
Compared with the Gaussian basis, the Mat\'ern-5 basis remains well conditioned over a wider range of \(\epsilon\), so the onset of rapid condition-number growth occurs at larger scale values. 
This is consistent with the better conditioning behavior of Mat\'ern-type bases observed in classical meshfree approximation~\cite{Wendland05,Fasshauer07}.

\begin{figure}[hbt!]
\centering
\subfigure[]{\label{eps_diag_matern5_F1_N961_G10}
\includegraphics[width=3.3in]{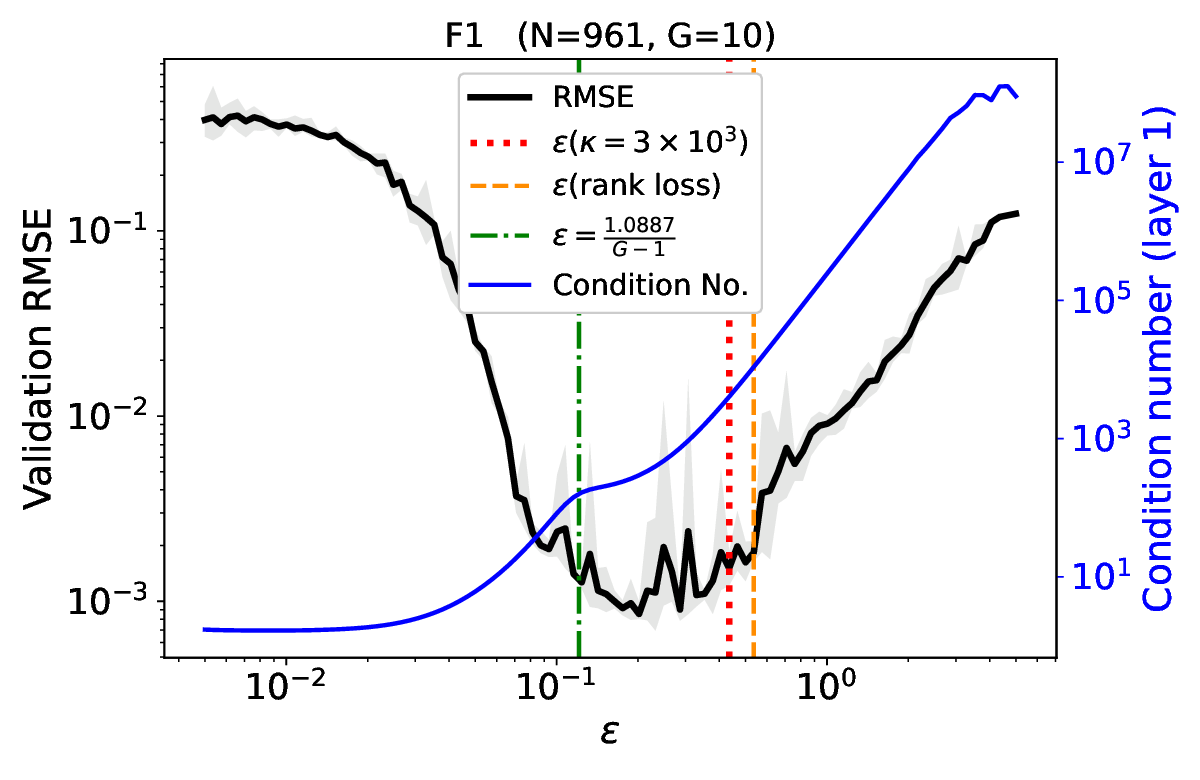}}
\subfigure[]{\label{eps_diag_matern5_F11_N961_G14}
\includegraphics[width=3.3in]{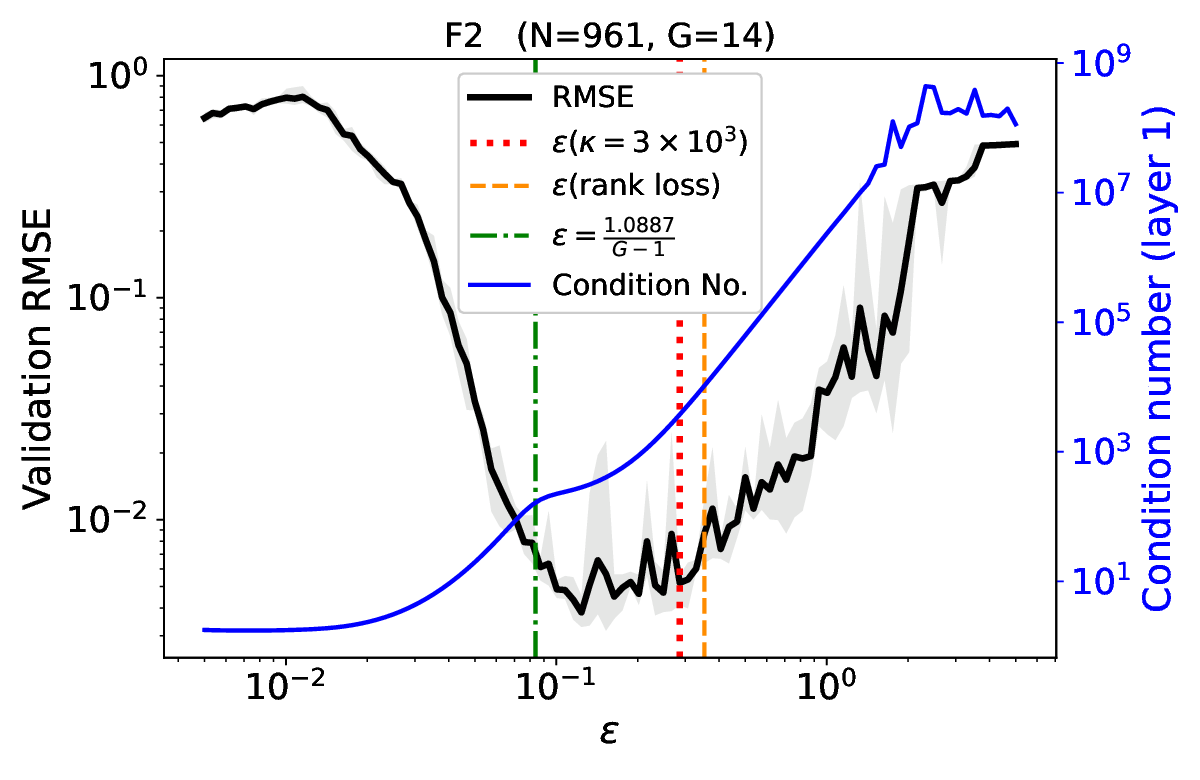}}
\subfigure[]{\label{eps_diag_matern5_F20_N1681_G16}
\includegraphics[width=3.3in]{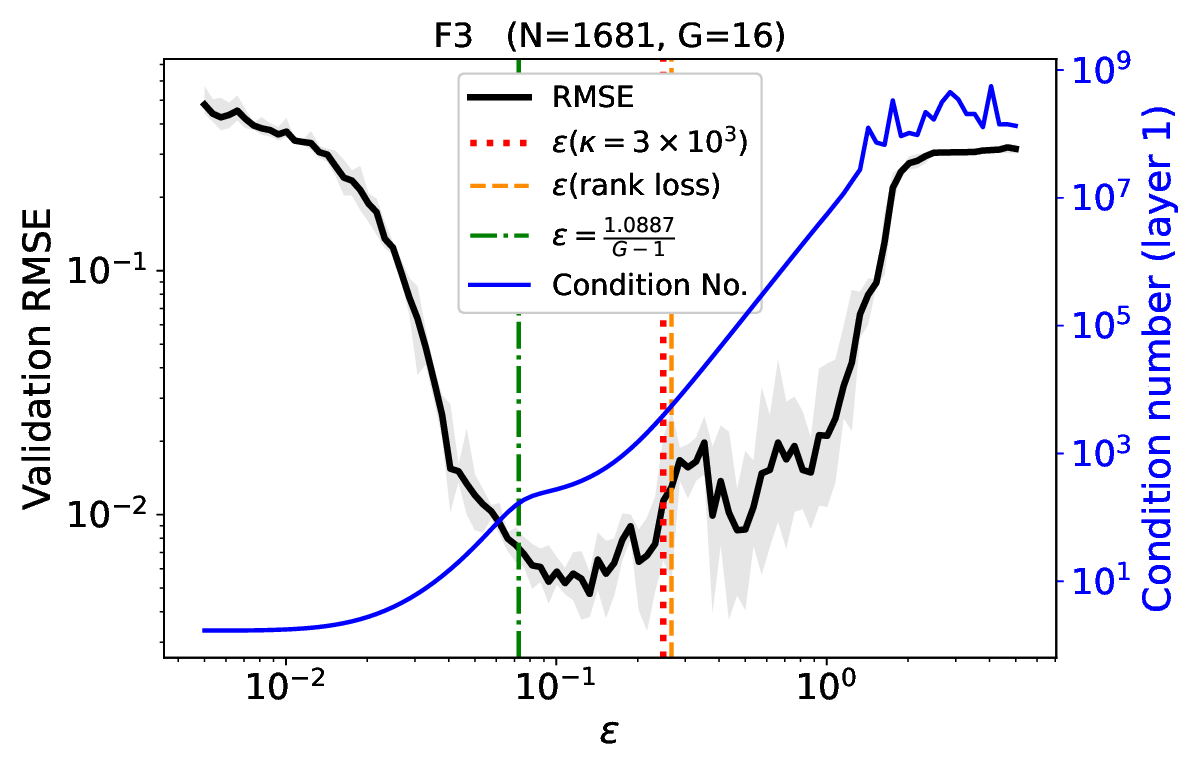}}
\subfigure[]{\label{eps_diag_matern5_F24_N1681_G20}
\includegraphics[width=3.3in]{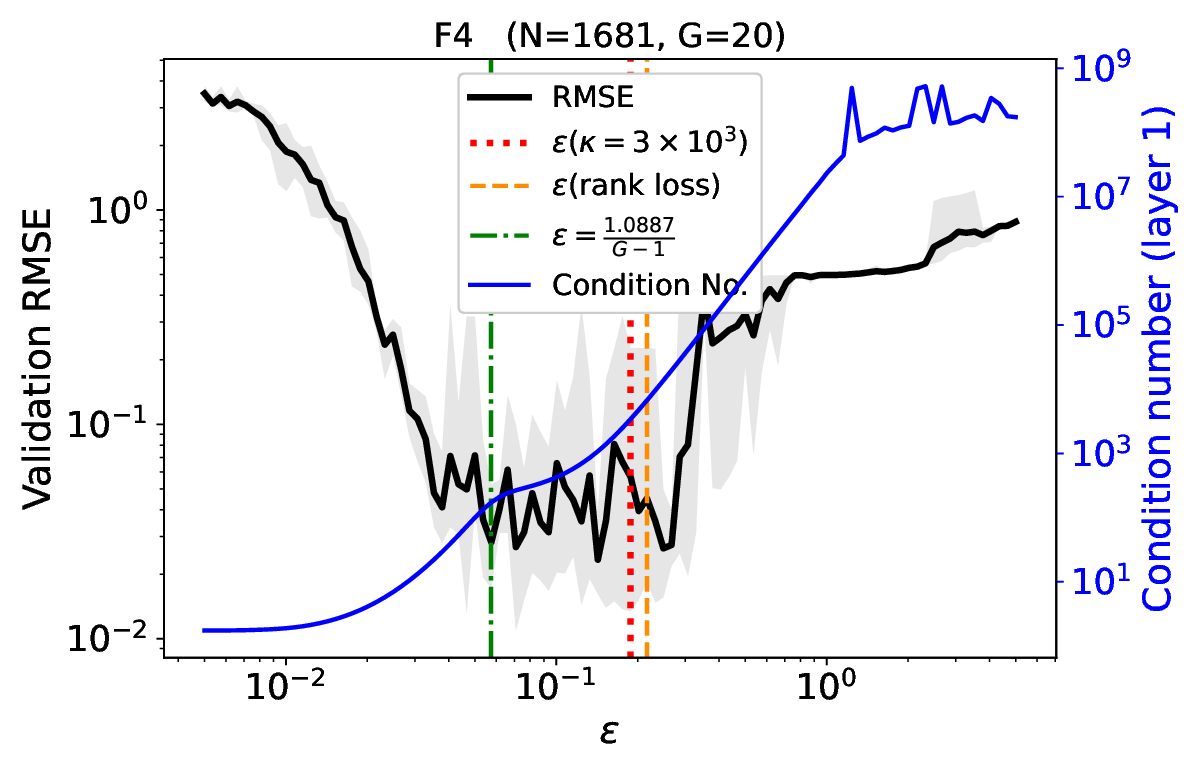}}
\caption{
Empirical localization of the Mat\'ern-5 scale interval. Each panel shows the validation RMSE, the first-layer condition number \(\kappa(\Phi^{\mathrm M}_{\epsilon,5})\), the lower overlap scale \(\epsilon=1.0887/(G-1)\), the conditioning marker \(\epsilon_{\kappa}^{\mathrm M}\), and the rank-loss marker. The best accuracy occurs between the lower overlap scale and the conditioning boundary.
}
\label{fig:Matern_Cond_safe} 
\end{figure}

\section{Conclusion}\label{sec:conclusion}

This work studies Gaussian KANs from a kernel-feature and conditioning viewpoint, with particular emphasis on the role of the Gaussian scale parameter \(\epsilon\). The results show that an appropriate choice of \(\epsilon\) is essential for both numerical stability and approximation accuracy. Under a matched number of trainable coefficients per edge, a properly scaled Gaussian KAN achieves accuracy comparable to, and in several cases better than, the corresponding Chebyshev KAN. This comparison is intended to demonstrate the competitiveness of Gaussian KANs when the scale is selected properly, rather than to provide a definitive benchmark between basis families.

The analysis identifies the first layer as the critical layer for scale selection because it is constructed directly on the input domain, and any loss of feature distinguishability introduced there propagates through the remaining network. Accordingly, the main stability quantity is the first-layer feature matrix \(\Phi\). Based on its numerical rank and conditioning, we identify the practical interval
\[
\epsilon \in \left[\frac{1}{G-1},\,\frac{2}{G-1}\right],
\]
where \(G\) is the number of Gaussian centers. The numerical experiments show a clear U-shaped error dependence on \(\epsilon\), with the best-performing region lying within or close to this interval. The rule remains informative across different target functions, collocation densities, grid sizes, network architectures, input dimensions, and physics-informed Helmholtz and digital-option problems. A similar conditioning-based behavior is also observed for Mat\'ern KANs.

The layer-wise experiments further show that varying the first-layer scale produces behavior close to that of the fully shared-\(\epsilon\) model, while changing only the deeper-layer scales has a weaker effect. Moreover, using separate scales across layers does not provide a clear improvement in the best attainable accuracy. Therefore, using one well-selected scale in all layers offers a simple and effective design choice.

The experiments also clarify the effects of the number of collocation points \(N\) and centers \(G\). Increasing \(N\) generally improves the attainable accuracy when \(\epsilon\) is properly selected. Increasing \(G\) mainly shifts the useful scale region toward smaller values and broadens the admissible range, but does not necessarily produce a substantially lower minimum error. Thus, a relatively small \(G\) may provide nearly the same accuracy at a lower computational cost.

Beyond fixed-scale selection, the proposed interval supports variable-scale constructions, constrained optimization of \(\epsilon\), and efficient scale search. In particular, early-stage training MSE provides a useful proxy for identifying a suitable scale when the exact solution is unavailable. Overall, the Gaussian scale parameter should not be treated as an arbitrary hyperparameter, but as a central design quantity that can be selected systematically to improve stability, accuracy, and computational efficiency.

\section{Limitations and Future Work}\label{sec:limitations}

This study focuses primarily on the conditioning of the first-layer feature matrix. Deeper layers may also become ill-conditioned because the transformed coordinates can be clustered or poorly aligned with the fixed Gaussian centers. Solving this issue would require more complex strategies, such as adaptive center relocation, multilevel training, or updating the collocation points during optimization. However, improved conditioning in deeper layers does not necessarily guarantee better accuracy, and the present results show that the standard Gaussian KAN remains effective when \(\epsilon\) is selected within the proposed range.

Several directions remain open. The proposed scale-selection rule should be further examined in classification, operator learning, and additional physics-informed problems. Its extension to other RBF-based KANs is also of interest~\cite{Wendland05,Iske96,Muller09}. Another direction is to use the proposed interval as a constraint when training a learnable \(\epsilon\), thereby avoiding unrestricted and potentially unstable scale optimization. Variable-scale Gaussian KANs also deserve further investigation, including different distributions and sampling strategies for the scale parameters. Since variable shape parameters have proved useful in classical RBF approximation~\cite{Ling06,Ling20,Fasshauer07}, they may provide further improvements in Gaussian KANs. Finally, testing more PDEs and larger problem settings would help assess the broader applicability of the proposed conditioning-based rule.

\section{Acknowledgments}
This work was supported by the General Research Fund (GRF No.12301824) of the Hong Kong Research Grant Council and the Guangdong and Hong Kong Universities ``1+1+1'' Joint Research Collaboration Scheme (project no. 2025A0505000014).

\appendix

\section{Additional comparison between Gaussian-KAN and Chebyshev-KAN}
\label{app:gaussian_chebyshev_appendix}

In this appendix, we provide an additional diagnostic comparison between Gaussian-KAN and Chebyshev-KAN on three representative two-dimensional target functions. The purpose of this experiment is not to establish a general superiority of one architecture over the other, but rather to illustrate the critical role of the Gaussian scale parameter \(\epsilon\). In all cases, the number of sample/collocation points is fixed as \(N=2000\), and the network architecture consists of two hidden layers with 12 neurons in each layer. For Gaussian-KAN, the number of grid points is fixed as \(G=20\), while \(\epsilon\) is varied. For Chebyshev-KAN, several polynomial degrees are considered, and these degrees are denoted by \(D\) in the figure. The corresponding validation RMSE values are shown as horizontal reference lines. In all cases, the reported RMSE is the best validation RMSE obtained within 10000 training epochs. 

The three target functions used in this appendix are defined as
\begin{align}
F_{1A}(x,y)
&=
0.75\exp\!\left(-\frac{(9x-2)^2+(9y-2)^2}{4}\right)
+
0.75\exp\!\left(-\frac{(9x+1)^2}{49}-\frac{(9y+1)^2}{10}\right)
\nonumber\\
&\quad
+
0.5\exp\!\left(-\frac{(9x-7)^2+(9y-3)^2}{4}\right)
-
0.2\exp\!\left(-(9x-4)^2-(9y-7)^2\right),
\qquad (x,y)\in[0,1]^2,
\label{eq:F1A_appendix}
\\[1ex]
F_{2A}(x,y)
&=
\sin(4\pi x)\sin(4\pi y),
\qquad (x,y)\in[0,1]^2,
\label{eq:F2A_appendix}
\\[1ex]
F_{3A}(x,y)
&=
\cos\!\left(10(x^2+y)\right)
\sin\!\left(10(x+y^2)\right),
\qquad (x,y)\in[-1,1]^2.
\label{eq:F3A_appendix}
\end{align}

Figure~\ref{fig:Versus_Cheby_Appendix} shows the exact contours of these target functions in the first row and the corresponding validation RMSE results in the second row. A clear scale-dependent behavior is observed for Gaussian-KAN. When \(\epsilon\) is poorly selected, the Gaussian-KAN error can be relatively large; however, when \(\epsilon\) is chosen in a suitable range, Gaussian-KAN can achieve lower validation RMSE than several Chebyshev-KAN configurations. This behavior is visible for all three examples, where the Gaussian-KAN curve forms a pronounced valley as \(\epsilon\) varies.

This result supports the main motivation of the present study: the scale parameter in Gaussian-KAN is not a minor implementation detail, but a decisive factor controlling accuracy. In practice, an unsuitable choice of \(\epsilon\) may make Gaussian-KAN appear less competitive than polynomial-based KAN variants such as Chebyshev-KAN. Conversely, an appropriate scale can lead to substantially improved accuracy. Therefore, the comparison emphasizes the importance of systematic scale-parameter selection in Gaussian-KAN, rather than serving as a broad benchmark comparison between Gaussian and Chebyshev bases.

\begin{figure}[hbt!]
\centering
\includegraphics[width=\textwidth]{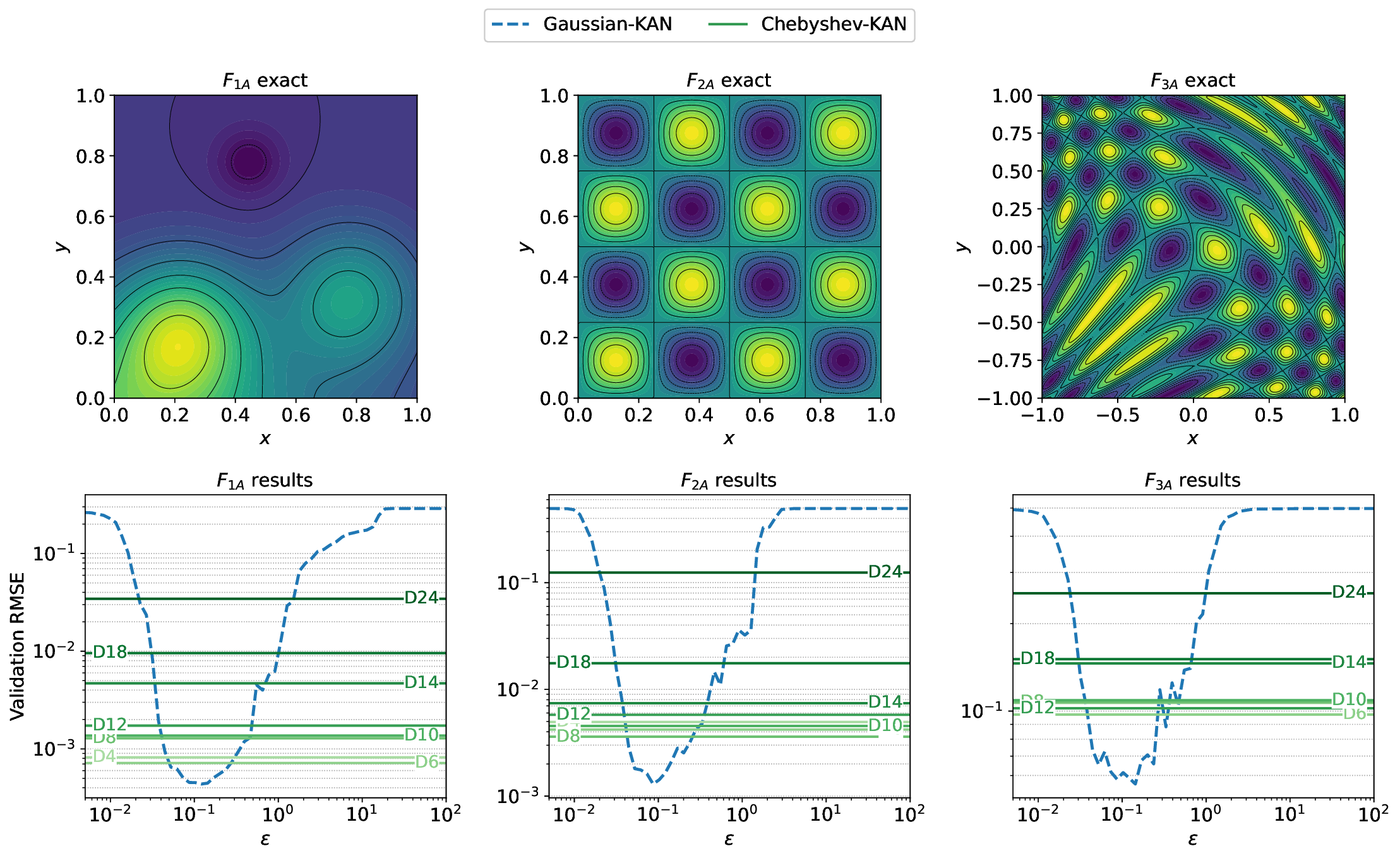}
    \caption{Additional appendix comparison between Gaussian-KAN and Chebyshev-KAN on the target functions \(F_{1A}\), \(F_{2A}\), and \(F_{3A}\). The first row shows the exact target contours, while the second row shows the best validation RMSE within 10000 epochs. Gaussian-KAN uses \(G=20\) and varying scale parameter \(\epsilon\); Chebyshev-KAN results are shown for different polynomial degrees.}
\label{fig:Versus_Cheby_Appendix}
\end{figure}

\section{Proof of Proposition~\ref{prop:first_layer_bottleneck}}
\label{app:first_layer_bottleneck_proof}

\begin{proof}
First, suppose that \(G(x)=G(x')\). 
Since the full network is written as
\[
f=T\circ G,
\]
we have
\[
f(x)=T(G(x)),
\qquad
f(x')=T(G(x')).
\]
Using \(G(x)=G(x')\), it follows immediately that
\[
f(x)=T(G(x))=T(G(x'))=f(x').
\]
This proves part (i).

Next, suppose that \(\Phi(x)=\Phi(x')\). 
The first-layer kernel is defined by the inner product of first-layer feature vectors,
\[
K_0(x,z)=\Phi(x)^\top\Phi(z).
\]
Therefore, for any \(z\in[0,1]^d\),
\[
K_0(x,z)
=
\Phi(x)^\top\Phi(z)
=
\Phi(x')^\top\Phi(z)
=
K_0(x',z).
\]
Thus, \(x\) and \(x'\) induce the same first-layer kernel section. 
Moreover, since \(\Phi(x)=\Phi(x')\), for any first-layer coefficient matrix \(W^{(0)}\),
\[
G(x)
=
W^{(0)}\Phi(x)
=
W^{(0)}\Phi(x')
=
G(x').
\]
Part (i) then implies
\[
f(x)=f(x')
\]
for every downstream map \(T\). 
This proves part (ii).

Finally, consider the Gaussian first-layer feature map. 
Each coordinate-wise Gaussian basis function has the form
\[
\phi_g(t)
=
\exp\!\left(-\frac{(t-c_g)^2}{\epsilon^2}\right),
\qquad
c_g\in[0,1].
\]
For fixed \(t\in[0,1]\) and fixed center \(c_g\), we have
\[
\frac{(t-c_g)^2}{\epsilon^2}\to 0
\qquad
\text{as }
\epsilon\to\infty.
\]
Therefore,
\[
\phi_g(t)\to \exp(0)=1.
\]
Since this holds for every coordinate and every center, the full first-layer feature vector satisfies
\[
\Phi(x)\to \mathbf{1}_{dG}
\qquad
\text{for every }x\in[0,1]^d.
\]
On the finite sample set \(\mathcal{X}=\{x_j\}_{j=1}^N\), stacking the feature vectors row-wise gives
\[
\Phi
\to
\mathbf{1}_{N}\mathbf{1}_{dG}^{\top}.
\]
Consequently,
\[
K_0
=
\Phi\Phi^\top
\to
\left(\mathbf{1}_{N}\mathbf{1}_{dG}^{\top}\right)
\left(\mathbf{1}_{dG}\mathbf{1}_{N}^{\top}\right)
=
\left(\mathbf{1}_{dG}^{\top}\mathbf{1}_{dG}\right)
\mathbf{1}_{N}\mathbf{1}_{N}^{\top}
=
dG\,\mathbf{1}_{N}\mathbf{1}_{N}^{\top}.
\]
This limiting matrix has rank one when \(N>1\). 
Furthermore,
\[
G(x)
=
W^{(0)}\Phi(x)
\to
W^{(0)}\mathbf{1}_{dG},
\]
which is independent of \(x\). 
Thus, the first hidden representation becomes constant across the sample set. 
This proves part (iii).
\end{proof}

\section{Proof of Proposition~\ref{prop:gram_vs_feature_conditioning}}
\label{app:gram_feature_conditioning_proof}

\begin{proof}
Let the singular value decomposition of \(\Phi\in\mathbb{R}^{N\times M}\) be
\[
\Phi
=
U\Sigma V^\top,
\]
where the nonzero diagonal entries of \(\Sigma\) are the singular values
\[
\sigma_1(\Phi)\ge \sigma_2(\Phi)\ge \cdots \ge \sigma_r(\Phi)>0.
\]
Then
\[
\Phi^\top\Phi
=
V\Sigma^\top U^\top U\Sigma V^\top
=
V\Sigma^\top\Sigma V^\top.
\]
Therefore, the eigenvalues of \(\Phi^\top\Phi\) are the squared singular values of \(\Phi\), namely
\[
\lambda_j(\Phi^\top\Phi)
=
\sigma_j(\Phi)^2.
\]

If \(\Phi\) has full column rank, then all \(M\) singular values are positive. Hence,
\[
\kappa(\Phi^\top\Phi)
=
\frac{\lambda_{\max}(\Phi^\top\Phi)}
{\lambda_{\min}(\Phi^\top\Phi)}
=
\frac{\sigma_1(\Phi)^2}
{\sigma_M(\Phi)^2}
=
\left(
\frac{\sigma_1(\Phi)}
{\sigma_M(\Phi)}
\right)^2
=
\kappa(\Phi)^2.
\]
This proves part (i).

If \(\Phi\) is rank deficient, then at least one singular value of \(\Phi\) is zero. 
Therefore, \(\Phi^\top\Phi\) has at least one zero eigenvalue and is singular. 
Thus,
\[
\kappa(\Phi)=\infty,
\qquad
\kappa(\Phi^\top\Phi)=\infty.
\]
This proves part (ii).
\end{proof}

\section{Derivation of the Mat\'ern-5 Lower Scale}
\label{app:matern_lower_bound}

This appendix derives the lower endpoint used in \eqref{eq:matern5_safe_interval_main}. 
The centers are uniformly distributed on \([0,1]\), so the adjacent-center spacing is
\begin{equation}
h=\frac{1}{G-1}.
\label{eq:app_matern_spacing}
\end{equation}

The lower scale is defined by enforcing the same adjacent-center overlap level used in the Gaussian case. 
Specifically, we require
\begin{equation}
\phi(h;\epsilon_{\mathrm{low}})=e^{-1},
\label{eq:app_overlap_rule}
\end{equation}
meaning that the value of a basis function at the neighboring center is \(e^{-1}\) of its peak value.

For the Mat\'ern-5 basis,
\begin{equation}
\phi_{\mathrm{M5}}(d;\epsilon)
=
\left(1+r+\frac{r^2}{3}\right)e^{-r},
\qquad
r=\frac{\sqrt{10}\,d}{\epsilon}.
\label{eq:app_matern5_profile}
\end{equation}
At the neighboring center \(d=h\), define
\begin{equation}
s=\frac{\sqrt{10}\,h}{\epsilon_{\mathrm{low}}^{\mathrm{M5}}}.
\label{eq:app_scaled_distance}
\end{equation}
Substituting \eqref{eq:app_scaled_distance} into \eqref{eq:app_overlap_rule} gives
\begin{equation}
\left(1+s+\frac{s^2}{3}\right)e^{-s}=e^{-1}.
\label{eq:app_matern5_root_equation}
\end{equation}
The positive solution of \eqref{eq:app_matern5_root_equation} is
\begin{equation}
s\approx 2.90463.
\label{eq:app_matern5_root_value}
\end{equation}
Solving \eqref{eq:app_scaled_distance} for \(\epsilon_{\mathrm{low}}^{\mathrm{M5}}\) yields
\begin{equation}
\epsilon_{\mathrm{low}}^{\mathrm{M5}}
=
\frac{\sqrt{10}}{s}\,h
=
\frac{\sqrt{10}}{2.90463}\,h
\approx
1.0887\,h.
\label{eq:app_matern5_lower}
\end{equation}
Using \(h=1/(G-1)\), we obtain
\begin{equation}
\epsilon_{\mathrm{low}}^{\mathrm{M5}}
\approx
\frac{1.0887}{G-1}.
\label{eq:app_matern5_lower_final}
\end{equation}
This gives the lower endpoint used for the Mat\'ern-5 safe interval.

\end{document}